\newcommand{\SubV}[0]{{\scaleto{V}{2.5pt}}}
\newcommand{\Locs}[0]{S_{M\times A}}
\newcommand{\Img}[0]{\mathrm{Im}}
\newcommand{\Lcal}[0]{\mathcal{L}}
\newcommand{\LA}[0]{{\mathcal{L}_A}}
\newcommand{\N}[0]{\mathbb{N}}
\newcommand{\Out}[0]{\mathrm{Out}}
\newcommand{\emptyword}{\epsilon}
\newcommand{\divr}[0]{\ \#_E \ }
\newcommand{\simeqq}[0]{\mathrel{\substack{
			\textstyle\sim\\[-0.4ex]
			\textstyle=}}}
\spnewtheorem{obs}{Observation}{\bfseries}{\itshape}
\title{Conformance Testing of Mealy Machines Under Input Restrictions}
\author{Alberto Larrauri, Roderick Bloem}
\institute{TU Graz, TU Graz}
\date{July 2021}
\begin{document}
	
	\maketitle
	
	\begin{abstract}
	We introduce a grey-box conformance testing method for networks of interconnected Mealy Machines. This approach addresses the scenario where all interfaces of the component under test are observable, but its inputs are under the control of other white-box components. We prove new
	conditions for full fault detection that
	exploit repetitions across branching
	executions of the composite machine in a novel way. Finally, we provide
	experimental evaluation of our
	approach on cascade compositions of 
	up to a thousand states, and show
	that it notably outperforms existing black-box testing techniques. 
	\end{abstract}
	
	\section{Introduction/motivation}
	
	In this paper we propose a grey-box testing approach for networks of interconnected Mealy Machines. We address the scenario where 
	all communications of the component under test can be observed, but some of its inputs are controlled by other white-box parts of the system. The presented method falls within the scope of conformance testing of finite state machines (FSMs)
	\cite{broyModelbasedTestingReactive2005, dorofeevaFSMbasedConformanceTesting2010a}
	\par
	In its most studied variant, the conformance testing problem for FSMs
	deals with deterministic and input-complete FSMs, i.e., Mealy machines 
	\cite{mooreGedankenExperimentsSequentialMachines1956,t.s.chowTestingSoftwareDesign1978,simaoReducingTestLength2012a,fujiwaraTestSelectionBased1991,dorofeevaImprovedConformanceTesting2005,souchaSPYHMethodImprovementTesting2018}.
	In this setting, we consider a 
	fully known Mealy machine $M$ (the specification) and a black box $B$, for which
	we only know a bound $k$ on
	the number of states. The goal is to design a test suite to determine whether the black box $B$ conforms (is equivalent) to the  $M$. \par
	FSM-based conformance testing is an active
	research area and numerous techniques exist in the literature (see the survey \cite{dorofeevaFSMbasedConformanceTesting2010a},
	or\cite{souzaHSwitchCoverNew2017}). The primary
	motivation of these techniques is the verification of reactive systems for which FSMs are a suitable model. Despite its simplicity, the FSM formalism is used in very diverse domains,
	yielding a broad range of applications for FSM-based testing
	\cite{broyModelbasedTestingReactive2005}.
	Another notable application of conformance testing lies 
	in automata learning
	\cite{delahigueraGrammaticalInferenceLearning2010} and derived procedures, such as black-box checking \cite{peledBlackBoxChecking1999}. 
	In the setting of the ``minimally adequate teacher'' introduced by Angluin \cite{angluinLearningRegularSets1987}, such techniques require an equivalence oracle in their application.  
	However, these oracles are largely impossible to obtain when dealing	with black box systems. Thus, in practice 
	equivalence queries are simulated through various kinds of testing strategies
	\cite{learnlib}. Furthermore, 
	there is a well-known close relation between model inference
	and conformance testing (see  \cite{bergCorrespondenceConformanceTesting2005}) that
	extends to even more recent automata learning techniques
	that do not require equivalence oracles \cite{souchaObservationTreeApproach2020,vaandragerNewApproachActive2021a}. \par
	
	In reality however, reactive 
	systems rarely consist of a single monolithic structure, but instead consist of smaller interacting components.
	Existing FSM-based techniques developed for black-box systems
	are not fit to deal with this context, as they suffer
	from the problem of state explosion and rapidly hit a wall.   
	Hence, there is a need for gray-box methods
	able to exploit information about known internal
	components and their communications. There are a few 
	notable conformance testing works in this direction
	\cite{petrenkoLearningCommunicatingState2019a,petrenkoTestingStrategiesCommunicating1995, petrenkoTestingPartialDeterministic2005},
	but it remains a relatively unexplored area. \par

	We consider a scenario where all interfaces of
	the component under test $B$ are observable,
	but its inputs are controlled by other known components
	of the system. The prototypical example of this occurs when $B$ is the tail component of a cascade composition of Mealy machines $B\circ H$, as depicted in Figure~\ref{fig:cascade}.
	The State-Counting method
	\cite{petrenkoTestingPartialDeterministic2005},
	one of the main approaches for this situation,	 
	resorts to treating $B$ as a partially specified
	Mealy machine- i.e., a machine where some transitions 
	are missing. This reduction relies on a classical
	construction for component minimization by Kim and 
	Newborn \cite{joonkikimSimplificationSequentialMachines1972}
	which involves an exponential blow-up of the problem's size.
	 However, it has been shown recently that this expensive 
	construction is not required
	to optimize components \cite{larrauriMinimizationSynthesisTail2021}, and that
	cheaper techniques may be used instead. \par
	Our main contribution in this paper is 
	a generalization of the State-Counting method which 
	avoids the Kim-Newborn construction. In order to achieve
	this, we develop a formalism for reasoning about
	interleaving executions in systems with universal
	branching. This allows us to prove new sufficient conditions
	for complete fault-detection in the gray-box setting. 
	We give two testing algorithms making use of this newly
	introduced theory, and show experimentally that
	they are able to handle compositions of up to a thousand states,
	whereas experimental data on reasonably sized examples does not exist for the state-of-the-art \cite{petrenkoLearningCommunicatingState2019a,petrenkoTestingPartialDeterministic2005}. 
	Additionally, we show a practical relation between the gray-box testing task
	and the classical problem of determining language inclusion between
	non deterministic automata (NFA) \cite{kupfermanVerificationFairTransition1996}, as well as the 
	problem of state reduction for NFAs \cite{jiangMinimalNFAProblems1993}. 

	\section{Preliminaries}
	\paragraph*{General Notation}
	Given an alphabet $X$, 
	we write $X^*$
	for the set of finite words of arbitrary length over $X$.
	We use $\emptyword$ to denote the empty word, and 
	given a word $\alpha$, $|\alpha|$ stands for its length.
	We write ($\alpha< \beta$) $\alpha\leq \beta$  when $\alpha$ is a (strict) prefix of $\beta$.

	\paragraph*{Automata Over Finite Words}
	We consider automata over finite words
	where all states 
	are accepting. Let $\varphi$ be a finite alphabet. A \textbf{non-deterministic finite automaton (NFA)} $A$ \textbf{over} $\varphi$,
	is a tuple $(\varphi, S_A, \Delta_A, r_A)$, 
	where $S_A$ is a finite set of states, $\Delta_A: S_A\times \varphi \rightarrow 2^{S_A}$ 
	is the transition function, and $r_A\in S_A$ is the 
	initial state.  A \textbf{run} of $A$ on a word $\alpha\in \varphi^*$ is defined as usual. We say that an state $s\in S_A$ \textbf{accepts} a word $\alpha$
	if there is a run of $A$ on $\alpha$ starting from $s$. If $s=r_A$
	we simply say that $A$ accepts $\alpha$.
	The \textbf{language} of 
	$s$ is the set $\LA(s)\subseteq \varphi^*$ containing the words accepted by $s$. Note that $\LA(s)$
	is prefix-closed. We simply write $\LA$ for $\LA(r_A)$. 
% 	Whenever $Q\subseteq S$, we write
% 	$\LA(Q)$ for $\Cup_{a\in Q} \LA(a)$.
	We lift $\Delta_A$ 
	to words $\alpha\in \varphi^*$ 
% 	and sets of states $Q\subseteq S_A$
	in the natural way. The set
	$\Delta_A(s,\alpha)$ consists
	of all $s^\prime$ such that some run of $A$ on $\alpha$
	from $s$ finishes at $s^\prime$.
% 	,
% 	and we write $\Delta_A(Q,\alpha)$ for $\bigcup_{s\in Q} \Delta_A(s,\alpha)$.
	We write $\Delta_A(\alpha)$ for $\Delta_A(r_A,\alpha)$.
% 	We say that $A$ is \textbf{deterministic} (a DFA), if 
% 	$|\Delta(s,a)|\leq 1$ for all $s,a$.
	% For considerations of computational
	% complexity, 
	% we consider the alphabet to be fixed, 
	% as before, and the size $|A|$ of an automaton 
	% $A$ to be proportional to $|S_A| + \sum_{s,a} |\Delta_A(s,a)|$.

	\paragraph*{Mealy Machines}
	\label{sec:mealy}
	A \textbf{Mealy machine} $M$ is a tuple 
	$(I_M, O_M, S_M,\delta_M,$ $\lambda_M, r_M)$,
	where $I_M, O_M$ are finite alphabets,
	$S_M$ 
	is a finite set of states,
	$\delta_M: S_M\times I_M \rightarrow S_M$ 
	is the next state function, $\lambda_M: S_M\times I_M \rightarrow O_M$ 
	is the output function and $r_M\in S_M$ is the initial 
	state.
	We lift $\delta_M$ and $\lambda_M$
	to input sequences in the natural way.
	We define $\delta_M(s, \emptyword)=s$,
	$\lambda_M(s,\emptyword)=\emptyword$ for all $s$. 
	Given $\alpha \in I_M^*, x \in I_M$, if 
	$s^\prime = \delta_M(s, \alpha)$,
	then
	$\delta_M(s,\alpha x)=\delta_M(s^\prime,x)$ and
	$\lambda_M(s,\alpha x) = 
	\lambda_M(s,\alpha)\lambda_M(s^\prime,x)$. We write $\delta_M(\alpha)$ and $\lambda_M(\alpha)$
	for $\delta_M(r_M,\alpha)$ and $\lambda_M(r_M,\alpha)$ 
	respectively. We say that $M$
	is \textbf{reduced} if for any pair of different states $s_1,s_2\in S_M$ there is a word
	$\alpha\in I_M$ distinguishing them, i.e., $\lambda_M(s_1,\alpha)\neq
	\lambda_M(s_2,\alpha)$.
	We define $\Out(M)$
	as the set of words $\lambda_M(\alpha)$, for
	all $\alpha\in I_M^*$.\par

	% For considerations of computational complexity, we consider alphabets to be fixed and the size $|M|$ of a Mealy machine $M$ to be proportional to its number of states. \par
	
	% We say that a (completely specified) Mealy machine $N$
	% \textbf{implements} an IS Mealy machine $M$ with the same input/output
	% alphabets as $N$ if
	% $\lambda_N(\overline{x})=
	% \lambda_M(\overline{x})$
	% for all 
	% $\overline{x}\in \Omega_M$.
	% The problem of minimizing an IS 
	% Mealy machine $M$ consists of 
	% finding a minimal implementation for it. It is well known that this
	% problem is computationally hard, in contrast with the minimization of completely specified Mealy machines. More precisely, we say that $M$ is \textbf{$n$-implementable} if it has some implementation whose size is not greater than $n$. Given an IS machine $M$
	% and some $n$, deciding whether 
	% $M$ is $n$-implementable is an NP-complete problem \cite{Pfleeger_1973}.
	
	\subsection{Conformance Testing}
	\label{sec:classic_testing}

	Let $M$ be a Mealy machine 
	representing an intended model or \textbf{specification} for
	a \textbf{black-box system} $B$. 
%	Our objective 
%	is to determine whether $B$ conforms to
%	$M$ by observing its output behaviour 
%	on some chosen input 
%	sequences. Additionally, we consider an upper bound $k$ on the number of states of $B$ to be known. Note that in absence of such bounds it is not possible to decide whether $M$ accurately models $B$, if the only information available are $B$'s responses to some finite number of input sequences. \par
	A \textbf{test suite for} $M$
	is a finite prefix-closed set 
	$E \subseteq (I_M)^*$. Sequences $\alpha\in (I_M)^*$ are
	called \textbf{tests}. We define suites as prefix-closed sets, because it simplifies the exposition of technical results later on. However, in practice only the maximal tests in a suite $E$ are relevant. This is because once the output response $\lambda_B(\alpha)$
	of $B$ to a test $\alpha$ is observed, 
	the outputs $\lambda_B(\beta)$ for all $\beta\leq \alpha$ are known as well. Thus, we define the \textbf{total length}, or the
	\textbf{number of symbols} of a suite $E$ as the sum of the lengths of its maximal tests. \par 
	We denote by $\Im$ 
	the set of Mealy machines 
	$N$ with the same input/output alphabets as $M$, and write $\Im_k$
	for the set of those with at most $k$ states. Given a machine $N\in \Im$, and a set $V\subseteq (I_M)^*$, we write $M\sim_V N$ if $\lambda_M(\alpha)=\lambda_N(\alpha)$
	for all $\alpha\in V$, or simply write
	$M\sim N$ when $V=(I_M)^*$.  We say that a suite $E$
	is $k$-complete if
	$M\sim_E N$ implies $M\sim N$
	for all $N\in \Im_k$. The conformance testing
	problem for Mealy machines is as follows. 
	
	\begin{problem}[Unrestricted conformance testing] \label{prob:classic} Given a Mealy machine $M$ and a number $k\in \N$, 
		compute a $k$-complete suite $E$ for $M$.
	\end{problem}
	
	There are three main parameters to optimize in this problem: 
	running time, number of maximal tests in the suite $E$, and number of symbols. The last two objectives are important because a suite may be used on multiple black boxes after its construction, or these black-box systems may be slow to execute. Thus, for some applications it may be worthwhile to develop a slower algorithm that results in smaller suites. We adopt the convention that suites produced by conformance testing algorithms are returned by listing their maximal tests. In these circumstances, the time cost of such algorithms is trivially bounded by the total length of the suites they construct.
	\par
	
	Methods developed to solve Problem~\ref{prob:classic} can be understood as modifications of the first technique, the W-method \cite{t.s.chowTestingSoftwareDesign1978,vasilevskiiFailureDiagnosisAutomata1975}. Despite the notable
	experimental improvements (e.g., \cite{dorofeevaFSMbasedConformanceTesting2010a,souchaSPYHMethodImprovementTesting2018}),
	the worst-case analysis of newer techniques does not improve that of the original algorithm, as the W-method is optimal in the worst case \cite{vasilevskiiFailureDiagnosisAutomata1975}.
	\par
	We discuss now the complexity of the W-method. 	
	Fix a reduced specification machine $M$. We 
	call the parameter $e\coloneqq k-|S_M|$ the number of
	\textbf{extra states}. This is a central variable in
	conformance testing, as it measures the uncertainty 
	about the black-box under test. The
	problem only is meaningful when $e\geq 0$.
	The number of (maximal) tests produced by the W-method is $O(|S_M|^2|I_M|^{e + 1})$, and the total number of symbols, as well as its time cost, are given by $O(|S_M|^2 k |I_M|^{e + 1})$. 
	Some insight on these bounds can be gained
	from the general structure of conformance testing
	methods. In most of them, the suite $E$
	is built in three stages. First, 
	one constructs a state-cover $V$ of $M$- i.e.,
	a set containing a word $\alpha$ with
	$\delta_M(\alpha)=s$ for each $s\in S_M$. Afterwards, one appends to $V$ the so-called
	\textbf{traversal set} $(I_M)^{e+1}$, of arbitrary words	of length $e+1$. This addition is unavoidable and it is responsible for the exponential factor in the previous bounds. Finally, some distinguishing suffixes are
	appended to each word in $V\cdot(I_M)^{e+1}$. 
    Improvements over the W-method usually revolve around modifications of this last step.

	\section{Problem Statement}

	In this section we introduce the 
	\textbf{restricted conformance testing} problem, which is the main subject of this text. As before, let $M$ be a Mealy machine representing a specification for a black box $B$. Let $A$ be an NFA over $I_M$ representing the \textbf{context} in which $B$ operates. 
	We consider the extension of the
	conformance testing problem where it is not possible to apply arbitrary tests to $B$, but only those sequences in $\LA$ can be used instead. Furthermore, now we do not ask whether $M$ and $B$
	are equivalent, but just whether they respond equally to sequences in $\LA$. That is, whether $M\sim_\LA B$. \par A \textbf{test suite for} $M$ \textbf{in the context of} $\LA$ is a finite prefix-closed set $E\subseteq \LA$. Analogously to before, we say that $E$ is
	$k$-\textbf{complete} (in the context of $\LA$) if whenever $M\sim_E N$ for some $N\in \Im_k$, it also holds that
	$M\sim_\LA N$. Sometimes we will drop the phrase ``in the context of $\LA$'', and simply say that $E$ is $k$-complete when $\LA$ is implied and there is no ambiguity. 
	We study the following problem: \par
	
	\begin{problem}[Restricted conformance testing] \label{prob:main}
		Provided with a Mealy machine $M$, an NFA $A$ over $I_M$, and some $k\in \N$, compute a $k$-complete suite $E\subseteq \LA$ for $M$ in the context of $\LA$.
	\end{problem}
	
		\begin{figure}
			\centering
	\includegraphics[width=0.5\textwidth]{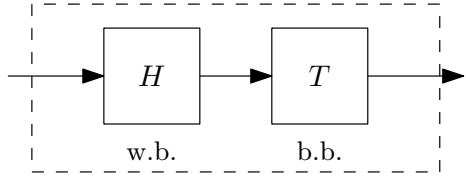}
	\label{fig:cascade}
	\caption{A cascade composition of Mealy machines}
	\end{figure}
	
	As mentioned during the introduction, our motivation for this task lies in the gray-box testing problem were the component under test has
	observable interfaces, but uncontrollable inputs.
	During this paper, we focus in the following
	particular case.

% 	A natural motivation for this problem arises when the black-box under testing $B$ is a component inside of a network of interconnected Mealy machines, and models for the other components are known. Here
% 	we can only observe responses of $B$ to input sequences occurring during normal executions of the whole system, and, in turn, we solely deem relevant the behaviour of $B$ on such sequences. Additionally, the set of input words that $B$ can receive during the functioning of the whole network can be expressed as a NFA in a simple manner. \par
	\subsection{Testing of The Tail Component}
	\label{sec:tail}
	A \textbf{cascade composition} $T\circ H$ of two Mealy machines, $T$ and $H$ consists in a 
	one-way sequential connection of both, where
	the head $H$ processes external inputs and the tail $T$ reacts to $H$'s outputs (Figure~\ref{fig:cascade}). 
	In this setting, $T$ can only respond to sequences belonging to $\Out(H)$. An NFA
	representing this language is easily obtained by ``removing'' the input symbols from $H$'s transitions, as shown in Figure~\ref{fig:img} \cite{joonkikimSimplificationSequentialMachines1972}. This is called the \textbf{image automaton of $H$}, $\Img(H)$. This construction
	shows a straight-forward reduction of the following task to Problem~\ref{prob:main}:
	
	\begin{problem}[Tail component testing] \label{prob:tail}
		Given a cascade of Mealy machines $T\circ H$, and some $k\in \N$, compute a $k$-complete suite $E\subseteq \Out(H)$ for $T$ in the context of $\Out(H)$.
	\end{problem}
	\vspace*{-15pt}
	
		\begin{figure}[ht]
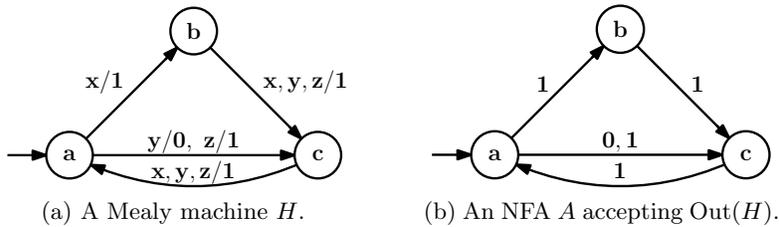

			\centering
			\hfill
			\subfloat[A Mealy machine $H$.
			\label{fig:H}]{
			\includegraphics[width=0.37\textwidth]{./Figures/H}
			}
			\hfill
			\subfloat[An NFA $A$ accepting $\Out(H)$.\label{fig:A}]{
			\includegraphics[width=0.37\textwidth]{./Figures/A}
			}
			\hfill
			\hfill
		
			\vspace{5mm}
			\caption{Construction of the image
			automaton for a Mealy machine.}
			\label{fig:img}

		\end{figure}

	To simplify the discussion we will use this particular case of component testing
	to motivate our main problem. However, more general forms of component testing
	were interfaces
	are observable
	can also be addressed via Problem~\ref{prob:main},
	as there are
	polynomial reductions 
	transforming this scenarios
	into cascade compositions \cite{wangInputDonCare1993,larrauriMinimizationSynthesisTail2021}. Now we give a brief overview existing solutions for the Tail Testing problem. \par
	
	\subsubsection{Baseline Solution: Testing of The Composite Machine}
	Given a cascade $T\circ H$, and a bound $k$,
	one can use existing black-box testing methods
	to solve Problem~\ref{prob:tail} in the following
	way. First, a Mealy machine $P$
	representing the whole composition
	can be obtained via a simple product construction \cite{harrisSynthesisFiniteState1998}.
	Here, $I_P=I_H$, $O_P=O_H\times O_T$, and 
	$S_P\subseteq S_H\times S_T$. Afterwards, 
	one can apply any existing conformance testing method to obtain a $|S_H|k$-complete suite $E$ for	$P$. Finally, computing the image $\lambda_H(E)$ of $E$ through $H$ we obtain a 
	$k$-complete suite for $T$ in the context of $\Out(H)$. Taking into account the bounds in
	\cref{sec:classic_testing}, the complexity of this approach is $O(|S_H|^3|S_T|^2 k |I_M|^{e_P + 1})$, where $e_P\coloneqq k|S_H| - |S_P|$. We note that even when
	$k=|S_T|$ and the original problem presents
	no extra states,
	$|S_P|$ can be much smaller than $|S_H||S_T|$,
	yielding a large $e_P$ and making this approach impractical. We refer to this problem as 
	the \textbf{blow-up of extra states}.\par

\subsubsection{Related Work} To the date
	there are two main approaches proposed
	for the Tail Testing problem which
	aim to overcome the blow-up of extra states
	of the previous method. They are the State-Counting method \cite{petrenkoTestingPartialDeterministic2005} and a more
	recent SAT-based technique \cite{petrenkoLearningCommunicatingState2019a}.
	Each one of these techniques encounter
	important issues in their complexity analyses, however, 
	and there is a lack of experimental data
	about their performance outside very small 
	examples (compositions not reaching ten
	states in total). \par
	The State-Counting method \cite{petrenkoTestingPartialDeterministic2005}
	gives sufficient conditions for complete fault detection in presence of input restrictions. 
	In order to apply these conditions to Problem~\ref{prob:tail}, one has to employ 
	the Kim-Newborn construction
	\cite{joonkikimSimplificationSequentialMachines1972},
	as described in \cite{petrenkoTestingStrategiesCommunicating1995}. This involves constructing a so-called ``incompletely specified machine'' $P^\prime$, via a product of $T$ and the determinization of the image automaton $\Img(H)$. The resulting size
	of $P^\prime$ is $|S_T|2^{|S_H|}$ in the worst case. This machine $P^\prime$ is used later
	as the specification model to produce a
	$k$-complete
	suite. The drawback of this analysis is, however,
	that this model $P^\prime$ can be exponentially
	bigger than the composite machine $P$ in the baseline method. This potentially yields exponentially larger suites with exponentially longer tests. \par
	The SAT-based approach in \cite{petrenkoLearningCommunicatingState2019a} constructs a $k$-complete suite $E$ for	$T$ in an iterative way, 
	asking a SAT solver whether there is
	some $T^\prime\in \Im_k$ with
	$T\sim_E T^\prime$ but
	$T\nsim_{\Out(H)} T^\prime$. If the answer is negative, $E$ is already $k$-complete.
	Otherwise, a suitable distinguishing sequence for $T$ and $T^\prime$ is added to $E$. This technique has the potential for producing small suites, but the
	drawback of having to perform a possibly expensive SAT
	call for the computation of each individual test, whose cost scales exponentially 
	with $|S_H|, |S_T|, k$ and $|E|$.\par
	
	\section{Theoretical Analysis}
    \label{sec:theory}
	During this section $M$ denotes a specification Mealy machine, $A$ a context NFA over $I_M$, and $E$ an unspecified test suite $E\subseteq \LA$. Lastly, we consider a reflexive binary
	relation $\sqsubseteq$ over $S_A$ which under-approximates language containment. 
	That is, $a \sqsubseteq b$ implies $\LA(a)\subseteq
	\LA(b)$ for all $a,b\in S_A$. The goal of this section is to give sufficient conditions for $k$-completeness of the suite $E$. These, in turn, will provide the correctness guarantees for our proposed algorithms (Section~\ref{sec:algs}). \par

    Our sufficient conditions build upon those in the State-Counting method \cite{petrenkoTestingPartialDeterministic2005}, and can be seen as a generalization
    of them. Informally, the main difference
    is that the State-Counting method only relates to the case where $A$ is deterministic.

	\subsection{Product of a Mealy Machine with an NFA}
	
	Suppose we want to study the observable 
	behaviours of $M$ after the application
	of a test $\alpha\in \LA$.
	Here, not only is it relevant to know the state $\delta_M(\alpha)$, but also the set
	of possible context states
	$a\in \Delta_A(\alpha)$. This is 
	because these states $a$
	determine which suffixes that can extend the test $\alpha$. Thus, in our setting,
	state pairs $(s,a)\in S_M\times 
	S_A$ play a major role. \par
	
	The \textbf{product transition function } is the
	map given by $
	\Delta_{M\times A}( (s,a), \alpha)=
	\{\delta_M(s,\alpha)\} \times \Delta_A(a,\alpha)$, for any
	$(s,a)\in S_M\times S_A$, $\alpha\in I_M^*$. Additionally, given $\alpha\in I_M^*$, we write $\Delta_{S\times A}(\alpha)$ to denote
	$\Delta_{S\times A}((r_M,r_A), \alpha)$.

	\begin{figure}[ht]
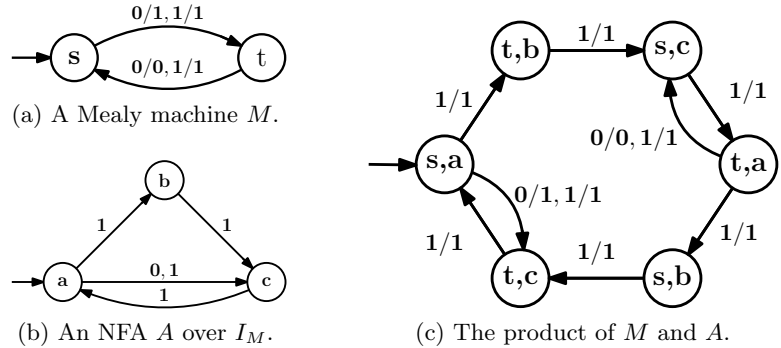

			\centering
			\hfill
			\begin{minipage}[b]{0.30\textwidth}
			\subfloat[A Mealy machine $M$.
			\label{fig:M}]{
			\includegraphics[width=\textwidth]{./Figures/M}
			}\\
			\centering
			\subfloat[An NFA $A$ over $I_M$.
			\label{fig:A2}]{
			\includegraphics[width=\textwidth]{./Figures/A}
			}
			\end{minipage}
			\hfill
			\begin{minipage}[b]{0.45\textwidth}
			\subfloat[The product of $M$ and $A$.\label{fig:prod}]{
			\includegraphics[width=\textwidth]{./Figures/Product}
			}
			\end{minipage}
			\hfill\hfill

			\caption{Representation of the product of a Mealy machine and a context NFA.}
			\label{fig:locs}
			
		\end{figure}
	
% 	From here on we shorten $S_M\times S_A$ to
% 	simply $\Locs$. We call the state pairs $(s,a) \in \Locs$ \textbf{locations}.
%      We denote
% 	the set of reachable locations by 
% 	$S_{M\times A}$. \par
    Informally,
	the semantics of the product $M\times A$ equipped with
	$\Delta_{M\times A}$ are those of a universally branching machine. Given an input word $\alpha$, an execution of this product
	consists on multiple parallel runs, each one being the product
	of a single run of $A$ on $\alpha$
	with the deterministic run of $M$ on this sequence. This notion of product of a Mealy machine
	with an NFA is explored in greater detail in \cite{larrauriMinimizationSynthesisTail2021}. \par
	
	In a state pair $(s,a)\in S_M\times S_A$, the state
	$s$ of $M$ is responsible for the input/output behaviour,
	while $a$ represents the input sequences 
	that are non-blocking at this point. 
	Two pairs $(s,a), (t,b)\in S_M\times S_A$ are \textbf{distinguishable} or \textbf{incompatible}, denoted $(s,a)\nsim (t,b)$
	if $\lambda_M(s,\alpha)\neq \lambda_M(t,\alpha)$
	for some sequence $\alpha$ available in both $(s,a)$
	and $(t,b)$, i.e., $\alpha\in \LA(a)\cap \LA(b)$.
	In this situation we say that
	$\alpha$ witnesses $(s,a)\nsim (t,b)$, written
	$\alpha \models (s,a)\nsim (t,b)$. 
% 	When $a=b$, we
% 	use $s\sim_a b$ as a shorthand for $(s,a)\sim (t,a)$.
	Two pairs are \textbf{equivalent}, denoted $(s,a)\cong (t,b)$, if, in addition to being compatible, it holds $a=b$. 
% 	Finally, a pair $(s,a)$ \textbf{subsumes}
% 	another one
	\par
    In the following result, we bound the 
	length of shortest distinguishing sequences
	for state-pairs in $S_M\times S_A$ (see Appendix~\ref{ap:seq} for the proof):
	
	\begin{theorem} \label{thm:seq}
		Let $M$ be a Mealy machine and let 
		$A$ be an NFA over $I_M$. Let $s,t\in S_M$ $b,a\in S_A$.
		Suppose that $(s,a)\nsim (t,b)$
		as well as
		$a\sqsupseteq b$. Then 
		there exists some $\alpha\in 
		\LA(b)$ satisfying both
		$\alpha\models (s,a)\nsim (t,b)$
		and $|\alpha|\leq |S_M||S_A|$.
	\end{theorem}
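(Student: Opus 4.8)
The plan is to avoid the naive approach of tracking the triple (state of $M$ reached from $s$, state of $M$ reached from $t$, state of $A$ reached from $b$) in a product graph of size $|S_M|^2|S_A|$ — which would only yield the weaker bound $|S_M|^2|S_A|$ — and instead run a Moore-style fixed-point refinement on the set $S_M\times S_A$ itself. The first, easy step is to use the hypothesis $a\sqsupseteq b$ to normalise the problem. Since $\sqsubseteq$ under-approximates language containment, $\LA(b)\subseteq\LA(a)$, so $\LA(a)\cap\LA(b)=\LA(b)$; thus a witness of $(s,a)\nsim (t,b)$ is precisely a word in $\LA(b)$ on which $\lambda_M(s,\cdot)$ and $\lambda_M(t,\cdot)$ disagree. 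In particular $(s,b)\nsim (t,b)$, and conversely any witness for the pair $(s,b),(t,b)$ is, by $\LA(b)\subseteq\LA(a)$, also one for $(s,a),(t,b)$. Hence it suffices to bound the length of a shortest witness for two state-pairs sharing the same context state.

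For $c\in S_A$ and $u,v\in S_M$, I define a decreasing chain of relations: $(u,c)\sim_0(v,c)$ iff $\lambda_M(u,x)=\lambda_M(v,x)$ for every $x\in I_M$ with $\Delta_A(c,x)\neq\emptyset$; and $(u,c)\sim_{i+1}(v,c)$ iff $(u,c)\sim_0(v,c)$ and, for every such $x$ and every $c'\in\Delta_A(c,x)$, $(\delta_M(u,x),c')\sim_i(\delta_M(v,x),c')$. (Together with equality of the context component these are relations on $S_M\times S_A$.) Two routine inductions on $i$ then establish: (1) every $\sim_i$ is an equivalence relation — the universal quantification over $c'\in\Delta_A(c,x)$ in the definition is exactly what makes transitivity survive the set-valuedness of $\Delta_A$; and (2) $(u,c)\nsim_i(v,c)$ iff there is $\alpha\in\LA(c)$ with $|\alpha|\le i+1$ and $\lambda_M(u,\alpha)\neq\lambda_M(v,\alpha)$. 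For the inductive step of (2) one splits $\alpha=x\beta$, picks a run $c\xrightarrow{x}c'$ of $A$ with $\beta\in\LA(c')$, and applies the induction hypothesis to $(\delta_M(u,x),c')$ and $(\delta_M(v,x),c')$; the converse re-assembles a short distinguishing sequence the same way.

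The chain $\sim_0\supseteq\sim_1\supseteq\cdots$ of equivalence relations on the $|S_M||S_A|$-element set $S_M\times S_A$ can strictly refine at most $|S_M||S_A|-1$ times, so it stabilises at some index $N\le |S_M||S_A|-1$. A final induction (showing that compatibility of $(u,c),(v,c)$ implies $(u,c)\sim_i(v,c)$ for all $i$, with (2) giving the reverse inclusion) identifies the limit relation with the complement of $\nsim$ on equal-context pairs. Therefore $(s,b)\nsim (t,b)$ yields $(s,b)\nsim_N(t,b)$, and (2) produces $\alpha\in\LA(b)$ with $\lambda_M(s,\alpha)\neq\lambda_M(t,\alpha)$ and $|\alpha|\le N+1\le |S_M||S_A|$; since $\LA(b)\subseteq\LA(a)$ this $\alpha$ also satisfies $\alpha\models(s,a)\nsim(t,b)$, which is the claim. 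I expect the main obstacle to be getting the definition of $\sim_{i+1}$ right: the property we ultimately want has an existential ``for some run of $A$'' flavour, but to keep the relations transitive — so that the Moore counting bound $|S_M||S_A|$ applies rather than $|S_M|^2|S_A|$ — it must be packaged as the universal condition above, and one has to check carefully that this universal definition still matches the ``$\exists\,c'\in\Delta_A(c,x)$'' characterisation in (2). Everything else is bookkeeping.
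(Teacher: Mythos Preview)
Your proposal is correct and essentially the same as the paper's argument: both reduce to equal context states via $\LA(b)\subseteq\LA(a)$, then run a Moore-style refinement and bound the number of strict refinements by $|S_M||S_A|$. The only difference is packaging --- the paper presents the equivalences as a family $\{\sim_a^j\}_{a\in S_A}$ with $\sim_a^0$ the universal relation (so your $\sim_i$ is their $\sim_\bullet^{\,i+1}$) and phrases stabilisation via the contrapositive of your fixed-point step, but the counting and the final bound are identical.
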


	% \begin{lemma} \label{lem:seq_Simple}
		% Let $M$ be a Mealy machine and let $s\nsim t$ be two of its states. Then there is some sequence $\alpha$ distinguishing $s$ and $t$ with $|\alpha|\leq |S_M|$.
		% \end{lemma}
	% \begin{proof}{Sketch of the proof}
		% Given $n\in \N$ and $s,t\in S_M$, we write $s\sim^n t$ if no $\alpha$ with $|\alpha|\leq n$ distinguishes $s$ and $t$. Clearly $\sim^n$ is an equivalence relation for all $n$, and it holds $\sim^{n}\supseteq \sim^{n+1}$, meaning that
		% $s\sim^{n+1} t$ implies $s\sim^n t$. 
		% Thus, the relations $(\sim^i)_{i\in \N}$ form a ranking. It can be shown that if $\sim^n=\sim^{n+1}$, then 
		% $(\sim^i)_{i\in \N}$ stabilizes at $n$, meaning that $\sim^n=\sim^m$
		% for all $m\geq n$. Furthermore,
		% the ranking $(\sim^i)_{i\in \N}$ has
		% to stabilize at most in $n\coloneqq|S_M|$ steps. Indeed, if $\sim^i\supsetneq \sim^j$ then 
		% $S_M/\sim^i$ has strictly less
		% equivalence classes than $S_M/\sim^j$, but there are only $n$ elements in $S_M$.
		% Finally, $s\sim t$ means that $s\sim^i t$ for all $i$. This is equivalent to $s\sim^n t$, as we have shown, showing that $s\nsim t$ implies $s\nsim^n t$, as we wanted to prove. 
		% \end{proof}
	
	% We want to generalize
	% Lemma~\ref{lem:seq_Simple} to 
	% restricted separating sequences. Let $M=T|_A$, for some Mealy machine $T$ and some NFA $A$ over $I_T$. The first 

	\subsection{Context Tree}
	
	During our discussions we need to consider the 
	``unrolling'' of the context automaton $A$ on 
	various words. We formalize this notion in the following definition. 
	The \textbf{context tree} is the set
	$\Gamma\subseteq \LA \times S_A$ consisting
	of the pairs $\sfrac{a}{\alpha}$, where $a\in 
	\Delta_A(\alpha)$. The elements $\sfrac{a}{\alpha}$
	of the testing tree are called \textbf{nodes}. A node
	$\sfrac{a}{\alpha}$ is read as ``$a$ at $\alpha$'',
	and represents a point during an execution of $A$.
	Given a set of sequences $D\subseteq \LA$, we put
	$\Gamma(D)$ for the nodes $\sfrac{a}{\alpha}\in \Gamma$ with $\alpha\in D$. We say that a node
	$\sfrac{b}{\beta}$ \textbf{precedes} another one $\sfrac{a}{\alpha}$, written $\sfrac{b}{\beta}\preceq
	\sfrac{a}{\alpha}$, if $\alpha=\beta \gamma$ 
	and $a\in \Delta_A(b,\gamma)$, for some $\gamma$. \par

	% That is,
	% given nodes $\sfrac{a}{\alpha},\sfrac{b}{\beta} \in \Gamma$
	% and a relation $\square \in \{
	% \sim, \nsim, \simeqq, \nsimeqq, \simsubset,
	% \nsimsubset\}$, we put 
	% $\sfrac{a}{\alpha}\square \sfrac{b}{\beta}$ if
	% $(\delta_M(\alpha),a)\square 
	% (\delta_M(\beta),b)$.

	Two tests $\alpha, \beta \in E$ are called $E$-\textbf{separable}, denoted $\alpha \ \#_E \ \beta$,
	if there is a suffix $\gamma$ 
	satisfying $\alpha\gamma, \beta\gamma\in E$
	and $\lambda_M(\delta_M(\alpha),\gamma)\neq
	\lambda_M(\delta_M(\beta),\gamma)$. 
	This notion of separability has been used in classical conformance testing \cite{simaoReducingTestLength2012a}, and learning (called ``apartness'') \cite{vaandragerNewApproachActive2021a}. 
	The following result gives justification for it. 
	\begin{lemma} \label{lem:aux1}
		Suppose that $\alpha\divr \beta$ for two tests $\alpha,\beta\in E$. Then 
		$\delta_N(\alpha)\neq \delta_N(\beta)$
		for any $N\in \Im$ satisfying $M\sim_E N$.
	\end{lemma}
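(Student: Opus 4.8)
The plan is a direct proof by contradiction, exploiting the prefix-decomposition of the output function together with the prefix-closedness of $E$. Unfolding the definition of $\alpha \divr \beta$, we fix a suffix $\gamma$ with $\alpha\gamma,\beta\gamma\in E$ and $\lambda_M(\delta_M(\alpha),\gamma)\neq\lambda_M(\delta_M(\beta),\gamma)$. We then fix an arbitrary $N\in\Im$ with $M\sim_E N$ and assume, for contradiction, that $\delta_N(\alpha)=\delta_N(\beta)$; the goal is to derive that the $\gamma$-responses of $M$ from $\delta_M(\alpha)$ and $\delta_M(\beta)$ must in fact coincide.

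The central step is to transfer the $\gamma$-disagreement from $M$ to $N$. Since $E$ is prefix-closed and $\alpha\gamma\in E$, we also have $\alpha\in E$, so $M\sim_E N$ yields both $\lambda_M(\alpha)=\lambda_N(\alpha)$ and $\lambda_M(\alpha\gamma)=\lambda_N(\alpha\gamma)$. Applying the concatenation identity $\lambda_X(\zeta\gamma)=\lambda_X(\zeta)\,\lambda_X(\delta_X(\zeta),\gamma)$ for $X\in\{M,N\}$ and cancelling the common prefix of length $|\alpha|$ gives $\lambda_M(\delta_M(\alpha),\gamma)=\lambda_N(\delta_N(\alpha),\gamma)$. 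The identical argument with $\beta$ in place of $\alpha$ (using $\beta\gamma\in E$, hence $\beta\in E$) gives $\lambda_M(\delta_M(\beta),\gamma)=\lambda_N(\delta_N(\beta),\gamma)$.

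Chaining these equalities with the assumption $\delta_N(\alpha)=\delta_N(\beta)$ yields $\lambda_M(\delta_M(\alpha),\gamma)=\lambda_N(\delta_N(\alpha),\gamma)=\lambda_N(\delta_N(\beta),\gamma)=\lambda_M(\delta_M(\beta),\gamma)$, contradicting the defining property of $\gamma$. Hence $\delta_N(\alpha)\neq\delta_N(\beta)$, which is the claim.

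There is no genuine obstacle here; the only points requiring care are (i) invoking prefix-closedness of $E$ so that $\alpha,\beta\in E$ and their $M$- and $N$-responses agree, and (ii) justifying the ``cancellation'' step, which is simply the observation that $\lambda_X(\zeta)$ is the length-$|\zeta|$ prefix of $\lambda_X(\zeta\gamma)$ and therefore $\lambda_X(\delta_X(\zeta),\gamma)$ is uniquely determined as the remaining suffix. I would state the argument for $\alpha$ in full and then say ``symmetrically for $\beta$'' rather than repeating it.
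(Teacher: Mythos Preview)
Your proof is correct and is exactly the standard argument one would expect; the paper itself states Lemma~\ref{lem:aux1} without proof, so there is nothing to compare against beyond noting that your write-up fills this gap cleanly. One minor redundancy: the hypothesis already gives $\alpha,\beta\in E$, so invoking prefix-closedness for that purpose is unnecessary (though harmless).
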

	
	Each node $\sfrac{a}{\alpha}\in \Gamma$ corresponds
	naturally to a location $(\delta_M(\alpha), a)$ $\in \Locs$. 
	% We can lift the compatibility relation to nodes using this 
	% correspondence. That is, we write 
	% $\sfrac{a}{\alpha}\sim \sfrac{b}{\beta}$ if the corresponding
	% locations are compatible $(\delta_M(\alpha),a)\sim
	% (\delta_M(\beta),b)$.
	Given a set
	of nodes $R\subseteq \Gamma(E)$,
	% and a relation $\square\in 
	% \{ \nsim, \nsimeqq, \nsimsubset \}$,
	we say that $E$ is \textbf{incompatibility-preserving
	with respect to} (w.r.t.) $R$
	if for any $\sfrac{a}{\alpha}, \sfrac{b}{\beta}\in C$
	with $(\delta_M(\alpha),a)\nsim (\delta_M(\beta), b)$
	it holds $\alpha \divr \beta$.

	\subsection{Rankings and Basic Proof of Completeness}
	
	During this section we prove a weaker version of our main result where the central arguments of the full proof
	are showcased. 
	A \textbf{node ranking} is a sequence
	$(\sfrac{a_j}{\alpha_j})_{j=1}^m \subseteq \Gamma$ 
	of nodes 
	where $\alpha_1 < \dots < \alpha_m$. We call a ranking \textbf{flat} if 
	$a_1=\dots = a_m$, and 
	\textbf{monotonous} 
	if $a_1 \sqsupseteq \dots \sqsupseteq a_m$. 
	We will be loose with the use of notation and treat 
	rankings as sets when convenient, instead of sequences. We write $R\preceq \sfrac{a}{\alpha}$ for a ranking $R$ whenever $\sfrac{b}{\beta}\preceq \sfrac{a}{\alpha}$ holds for all elements
$\sfrac{b}{\beta} \in R$.\par
	We say that a node $\sfrac{a}{\alpha}$ is $k$\textbf{-saturated} if there is a monotonous ranking $R\subseteq \Gamma(E)$ with $|R|=k$, where $\sfrac{b}{\beta}\preceq \sfrac{a}{\alpha}$ for all $\sfrac{b}{\beta}\in C$,
	and $E$ is incompatibility-preserving w.r.t. $R$. If all the nodes $\sfrac{a^\prime}{\alpha}$ with $a^\prime \in \Delta_A(\alpha)$ are $k$-saturated, then we say that
	the sequence $\alpha$ is $k$-\textbf{saturated} itself.

	\begin{theorem}
		\label{thm:main_no_cover}
		Suppose that all tests $\alpha\in \LA \setminus E$ have a prefix $\beta\in E$
		which is $(k+1)$-saturated. Then $E$
		is $k$-complete. 
	\end{theorem}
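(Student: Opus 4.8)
The plan is to proceed by contradiction, in the style of classical $k$-completeness proofs, but with the state-cover argument replaced by a pigeonhole argument over the saturating ranking. Assume some $N\in\Im_k$ satisfies $M\sim_E N$ yet $M\nsim_\LA N$, and pick a shortest word $\alpha\in\LA$ with $\lambda_M(\alpha)\neq\lambda_N(\alpha)$. Since $\lambda_M(\alpha)\neq\lambda_N(\alpha)$ while $M\sim_E N$, necessarily $\alpha\notin E$, so $\alpha\in\LA\setminus E$ and by hypothesis it has a $(k+1)$-saturated prefix $\beta\in E$; as $\beta\neq\alpha$ we may write $\alpha=\beta\sigma$ with $\sigma\neq\emptyword$. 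Fix a run of $A$ on $\alpha$, let $a'\in\Delta_A(\beta)$ be the state it reaches after reading $\beta$ and $a\in\Delta_A(\alpha)$ its final state, so that $a\in\Delta_A(a',\sigma)$. Since $\beta$ is $(k+1)$-saturated, the node $\sfrac{a'}{\beta}$ is $(k+1)$-saturated, which supplies a monotonous ranking $R=(\sfrac{a_j}{\alpha_j})_{j=1}^{k+1}\subseteq\Gamma(E)$ with $\sfrac{a_j}{\alpha_j}\preceq\sfrac{a'}{\beta}$ for every $j$ and with $E$ incompatibility-preserving w.r.t.\ $R$.

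Next I would use that $N$ has at most $k$ states while $\alpha_1<\dots<\alpha_{k+1}$ are $k+1$ distinct tests lying in $E$: by pigeonhole there are indices $i<j$ with $\delta_N(\alpha_i)=\delta_N(\alpha_j)$. By Lemma~\ref{lem:aux1} (and $\alpha_i,\alpha_j\in E$) this means $\alpha_i$ and $\alpha_j$ are not $E$-separable, so incompatibility-preservation of $E$ w.r.t.\ $R$ forces the locations $(\delta_M(\alpha_i),a_i)$ and $(\delta_M(\alpha_j),a_j)$ to be compatible. Now write $\beta=\alpha_j\mu_j$ with $a'\in\Delta_A(a_j,\mu_j)$, which is exactly what $\sfrac{a_j}{\alpha_j}\preceq\sfrac{a'}{\beta}$ provides. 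Concatenating the $A$-runs $a_j\xrightarrow{\mu_j}a'\xrightarrow{\sigma}a$ shows $\mu_j\sigma\in\LA(a_j)$; since $R$ is monotonous we have $a_j\sqsubseteq a_i$, and because $\sqsubseteq$ under-approximates language containment this gives $\mu_j\sigma\in\LA(a_i)$ as well. Prepending the $A$-run on $\alpha_i$ that ends in $a_i$ (recall $a_i\in\Delta_A(\alpha_i)$), the ``shortcut'' word $\hat\alpha:=\alpha_i\mu_j\sigma$ lies in $\LA$, and $|\hat\alpha|<|\alpha|$ because $|\alpha_i|<|\alpha_j|$.

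Finally, minimality of $\alpha$ gives $\lambda_M(\hat\alpha)=\lambda_N(\hat\alpha)$, and I would extract the contradiction by splitting both sides along $\alpha_i\cdot\mu_j\cdot\sigma$. On the $M$ side, compatibility of $(\delta_M(\alpha_i),a_i)$ and $(\delta_M(\alpha_j),a_j)$ applied to $\mu_j\sigma\in\LA(a_i)\cap\LA(a_j)$ lets one replace $\lambda_M(\delta_M(\alpha_i),\mu_j\sigma)$ by $\lambda_M(\delta_M(\alpha_j),\mu_j\sigma)$, so the last $|\sigma|$ output symbols of $\lambda_M(\hat\alpha)$ equal $\lambda_M(\delta_M(\beta),\sigma)$. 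On the $N$ side, $\delta_N(\alpha_i)=\delta_N(\alpha_j)$ collapses the prefix, so the last $|\sigma|$ symbols of $\lambda_N(\hat\alpha)$ equal $\lambda_N(\delta_N(\beta),\sigma)$. Comparing the two equal outputs symbol by symbol yields $\lambda_M(\delta_M(\beta),\sigma)=\lambda_N(\delta_N(\beta),\sigma)$, and since $\beta\in E$ gives $\lambda_M(\beta)=\lambda_N(\beta)$, we obtain $\lambda_M(\alpha)=\lambda_N(\alpha)$, contradicting the choice of $\alpha$; hence no such $N$ exists and $E$ is $k$-complete. I expect the main obstacle to be precisely the middle step: in the unrestricted setting one would simply delete a loop of $N$, but here the shortened input $\hat\alpha$ must remain inside $\LA$, and it is exactly this requirement that forces the ranking to be monotonous and $\sqsubseteq$ to under-approximate $\LA$-containment — getting this interaction right, rather than the surrounding bookkeeping, is the crux of the argument.
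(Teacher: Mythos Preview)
Your proof is correct and follows essentially the same route as the paper's: contradiction via a shortest distinguishing sequence, use of the $(k+1)$-saturated prefix to obtain a monotonous ranking, pigeonhole on $N$'s states, and the construction of a strictly shorter ``shortcut'' word in $\LA$ via $\sqsubseteq$. The only cosmetic difference is that the paper derives the contradiction by showing the shortened word $\hat\alpha$ still distinguishes $M$ from $N$, whereas you invoke minimality to conclude $\lambda_M(\hat\alpha)=\lambda_N(\hat\alpha)$ and then push this back to $\lambda_M(\alpha)=\lambda_N(\alpha)$; the two are equivalent reformulations of the same argument.
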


	\begin{proof}
	    The proof follows an argument
	    of infinite descent. The idea
	    is that given a test
	    $\alpha\in \LA \setminus E$ which detects
	    a fault not covered by $E$, another
	    strictly shorter sequence $\alpha^\prime$ with the same properties can be found. As decreasing sequences
	    of natural numbers are necessarily finite, this scenario is impossible and
	    full fault detection by $E$ is guaranteed. The central part of the ``shrinking'' argument is that
	    whenever a sufficiently large ranking
	    $R$ can be found throughout a test $\alpha$,
	    then this sequence necessarily follows a
	    ``lasso''-like path
	    in the product $M\times A$
	    and some central portion of $\alpha$
	    can be removed.\par
		We proceed by contradiction. 
		Let $N\in \Im_k$ be a machine satisfying both $M\sim_E N$ 
		and $M\nsim_\LA N$. Let $\alpha\in \LA \setminus E$ be a shortest test
		distinguishing $M$ and $N$. We show that it is possible to build an even shorter sequence $\alpha^\prime$ that
		also distinguishes $M$ and $N$.  
		Let $\beta\in E$ be a $(k+1)$-saturated prefix of $\alpha$, and let $\gamma$
		be the suffix satisfying $\beta\gamma=\alpha$ 
% 		(Figure~\ref{fig:proof_aid1})
		. 
		As $\alpha\in \LA$, it must be that
		$\gamma\in \LA(b)$ for some $b\in \Delta_A(\beta)$. The node $\sfrac{b}{\beta}$
		is $(k+1)$-saturated, so there is
		some monotonous ranking $R
		\subseteq \Gamma(E)$ witnessing this property. Let $R=(\sfrac{c_j}{\varphi_j})_{j=1}^{k+1}$. As $|S_N|\leq k$, by the pigeonhole principle there must be two indices
		$x< y$ for which $\delta_N(\varphi_{x})=\delta_N(\varphi_{y})$. Let $\omega$ be the suffix satisfying $\varphi_{y}\omega=\beta$,
		and $\varphi_{y}\omega\gamma=\alpha$.
		Let $\alpha^\prime\coloneqq\varphi_{x}\omega\gamma$ 
% 		(Figure~\ref{fig:proof_aid2}). 
		The following statements hold true: \par
% 		\begin{figure}[ht]
% 			\centering
% 			\hfill
% 			\subfloat[Decomposition of $\alpha$.
% 			\label{fig:proof_aid1}]{
% 				\includegraphics[width=0.35\textwidth]{./Figures/Proof_aid1}
% 			}
% 			\hfill
% 			\subfloat[Construction of $\alpha^\prime$.\label{fig:proof_aid2}]{
% 				\includegraphics[width=0.35\textwidth]{./Figures/Proof_aid2}
% 			}
% 			\hfill
% 			\vspace{5mm}
% 			\caption{Shortening of a distinguishing sequence as described in the proof of Theorem~\ref{thm:main_no_cover}.}
% 			\label{fig:proof_aid}
			
% 		\end{figure}

		\begin{claim}[I] $\alpha^\prime \in \LA$.
		\end{claim}
		First, note that $\omega\gamma\in \LA(c_{y})$. Indeed, this follows from $(\varphi_{y}, c_{y})\preceq 
		\sfrac{b}{\beta}$ together with $\gamma\in 
		\LA(b)$. As $c_{x}\sqsupseteq c_{y}$, it also holds that 
		$\omega\gamma\in \LA(c_{x})$. This,
		in conjunction with $c_{x}\in \Delta_A(\varphi_{x})$, shows the claim. 
		\begin{claim}
		$\lambda_M(\delta_M(\varphi_{y}),\omega\gamma)\neq \lambda_N(\delta_N(\varphi_{y}),\omega\gamma)$.
		\end{claim}
		The fact that $M\sim_E N$ and $\varphi_{y}\in E$, implies $\lambda_M(\varphi_{y})=\lambda_N(\varphi_{y})$. However, we know that
		$\lambda_M(\alpha)\neq \lambda_N(\alpha)$, and $\alpha= \varphi_{y}\omega\gamma$, so the claim follows. \par
		\begin{claim}[III]
			$\lambda_N(\delta_N(\varphi_{x}),\omega\gamma)= \lambda_N(\delta_N(\varphi_{y}),\omega\gamma)$.
		\end{claim}
		This is straight-forward, as
		$\delta_N(\varphi_{x})= \delta_N(\varphi_{y})$.\par
		
		\begin{claim}[IV]
			$\lambda_M(\delta_M(\varphi_{x}),\omega\gamma)= \lambda_M(\delta_M(\varphi_{y}),\omega\gamma)$.
		\end{claim}
		Suppose that $(\delta_M(\varphi_{x}),c_{x})\nsim 
		(\delta_M(\varphi_{y}),c_{y})$. 
		As $R$
		is a ranking witnessing that $\sfrac{b}{\beta}$ is $(k+1)$-saturated, $E$ is
		incompatibility-preserving w.r.t. $R$. Thus, 
		$\varphi_{x}\divr \varphi_{y}$ follows. However, by Lemma~\ref{lem:aux1} this
		contradicts the fact that $\delta_N(\varphi_{x})=
		\delta_N(\varphi_{y})$ while at the same time $M\sim_E N$. Hence,
		$(\delta_M(\varphi_{x}),c_{x})\sim 
		(\delta_M(\varphi_{y}),c_{y})$ must hold. This implies the statement, because 
		$\omega\gamma \in \LA(c_{x}) \cap 
		\LA(c_{y})$, as evidenced during the first claim. \\~\\
		These four claims put together show that $\alpha^\prime$ belongs to $\LA$, while also distinguishing $M$ and $N$.
		However $|\alpha^\prime|<|\alpha|$, contradicting our initial choice of
		$\alpha$. Thus, no machine $N\in \Im_k$
		can satisfy $M\sim_E N$ and $M\nsim_\LA N$
		at the same time. This completes the proof of our theorem. \qed
	\end{proof}
	
	\subsection{Cores and Covers}
	
	Analogously to classical conformance testing algorithms, our proposed methods rely on the initial construction of ``cover'' of relevant locations. For this we use a notion of 
	core equivalent to the one appearing in 
	\cite{petrenkoTestingPartialDeterministic2005}.
	\par

	We say that a set $V\subseteq \LA$ 
	is \textbf{well-founded} if $\emptyword\in V$.
	Let $V$ be a well-founded set.
	For a word $\alpha\in \LA$, we define $|\alpha|_\SubV$ as the length
	of the shortest suffix $\gamma$ satisfying $\beta\gamma = \alpha$, for
	some $\beta\in V$. Given words $\alpha,\beta$, we write $\beta\leq_\SubV \alpha$ if
	$\beta\leq \alpha$ and additionally
	$\beta < \gamma < \alpha$ holds for no
	sequence $\gamma\in V$. 
	Intuitively, this means that $\beta$
	lies along the shortest path from $V$
	to $\alpha$. It is straightforward to see that
	$\leq_\SubV$ constitutes a partial order
	over $\LA$.
	% A node ranking $(\alpha_j, a_j)_{j=1}^m\subseteq\Gamma$ is called an
	% $V$-ranking if $\alpha_{j_1} \leq_\SubV \alpha_{j_2}$ for all $j_1\leq j_2$.
	Finally, we put
	$\sfrac{b}{\beta}\preceq_\SubV \sfrac{a}{\alpha}$ 
	for a pair of nodes if $\beta \leq_\SubV \alpha$, in addition to 
	$\sfrac{b}{\beta}\preceq \sfrac{a}{\alpha}$.
	Given a ranking $R$, we define $R\preceq_\SubV \sfrac{a}{\alpha}$
	analogously as before. 
	\par
	
	We call a set of
	locations $Q\subseteq \Locs$ a \textbf{core}, 
	if for all $(s,a)\in \Locs$
	there is some $(t,b)\in Q$ with
	$(s,a) \sim (t,b)$ and $b\sqsupseteq a$.
	A \textbf{core cover} is a well-founded set $V\subseteq \LA$ for which the set $\{ \, (s,a) \, | \,
	\exists \alpha\in V, \, (s,a) \in \Delta_{S\times A}(\alpha) \, \}$ is a core. \par

	\subsection{Certificates and Main Condition
	for Completeness}
	
	Here we give our main 
	sufficient condition for suite completeness.
	This condition is enforced constructively
	by our proposed algorithms (Section~\ref{sec:algs}), ensuring that
	they produce $k$-complete suites, as required. 
	For the remainder of the section, we fix a 
	core $Q\subseteq \Locs$ and a corresponding
	cover $V\subseteq E$, in addition to 
	$M, A, E, \sqsubseteq$, which were set beforehand. \par

	Given a node ranking $R\subseteq \Gamma$, a \textbf{basis} for $R$ is another set of nodes $B\subseteq \Gamma(V)$ satisfying the following two properties: 
	(1) Nodes in $B$ correspond to pair-wise
	incompatible locations. That is, $(\delta_M(\alpha),a)\nsim (\delta_M(\beta), b)$ for all 
	nodes  
	$\sfrac{a}{\alpha}, \sfrac{b}{\beta}\in B$. 
	(2) Whenever 
	$(\delta_M(\alpha),a)\sim (\delta_M(\beta), b)$ holds for some $\sfrac{a}{\alpha}\in B$, $\sfrac{\beta}{b}\in C$, it follows that $a\sqsupseteq b$. Intuitively, this means
	that $B$ represents more ``testable'' locations than $R$. 
	\par
	
	A \textbf{redundancy certificate} for a node $\sfrac{a}{\alpha}$ is a pair $(R,B)$ where $R\subseteq \Gamma(E\setminus V)$ is a monotonous ranking satisfying $R\preceq_\SubV \sfrac{a}{\alpha}$, and $B\subseteq \Gamma(V)$ is a basis for $R$. 
	Note that according to this definition $R$ and
	$B$ are disjoint. 
	Analogously to rankings, certificate is \textbf{flat} if
	all nodes in $R\cup B$ correspond
	to the same state $a\in S_A$.
	We say that a node $\sfrac{a}{\alpha}\in \Gamma$ is $k$-\textbf{redundant} if there is some
	redundancy certificate $(R,B)$ for the node $\sfrac{a}{\alpha}$ which satisfies $|R| + |B|= k$ and
	$E$ is incompatibility preserving w.r.t. $R\cup B$. Analogously as with $k$-saturated sequences, we say that a test $\alpha\in \LA$ is $k$-redundant if all the nodes $\sfrac{a}{\alpha}$,
	where $a\in \Delta_A(\alpha)$,
	are $k$-redundant themselves. 
	\begin{theorem}
		\label{thm:main_w_cover}
		Suppose that all tests $\alpha\in \LA \setminus E$ have a $(k+1)$-redundant prefix $\beta\in E$, satisfying
		$\beta \leq_\SubV \alpha$. Then $E$
		is $k$-complete. 
	\end{theorem}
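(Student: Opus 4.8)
The plan is to run the infinite-descent argument from the proof of Theorem~\ref{thm:main_no_cover}, but to feed the pigeonhole step with the $k+1$ nodes of a redundancy certificate instead of the $k+1$ nodes of a saturating ranking. Assume towards a contradiction that some $N\in\Im_k$ satisfies $M\sim_E N$ and $M\nsim_\LA N$; since $M\sim_E N$, every word on which $M$ and $N$ disagree lies in $\LA\setminus E$. I will show that from any such distinguishing word $\alpha$ one can construct another distinguishing word $\alpha'\in\LA\setminus E$ that is strictly smaller in an appropriate well-founded measure $\mu$. This is absurd, so no such $N$ exists and $E$ is $k$-complete.

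For the descent step take a distinguishing $\alpha\in\LA\setminus E$ and, by hypothesis, a $(k+1)$-redundant prefix $\beta\in E$ with $\beta\leq_\SubV\alpha$. Write $\alpha=\beta\gamma$ (so $\gamma\neq\emptyword$, as $\beta\in E\not\ni\alpha$) and fix $b\in\Delta_A(\beta)$ with $\gamma\in\LA(b)$. The node $\sfrac{b}{\beta}$ carries a certificate $(R,B)$ with $|R|+|B|=k+1$, where $R\subseteq\Gamma(E\setminus V)$ is a monotonous ranking with $R\preceq_\SubV\sfrac{b}{\beta}$, $B\subseteq\Gamma(V)$ is a basis for $R$, and $E$ is incompatibility-preserving w.r.t.\ $R\cup B$. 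The $k+1$ nodes of $R\cup B$ sit at $k+1$ distinct words: the $R$-words are pairwise distinct and lie in $E\setminus V$, the $B$-words lie in $V$, and two $B$-nodes at the same word would force, via basis property~(1) and incompatibility-preservation, the impossible $E$-separability of a word from itself. Since $|S_N|\leq k$, two of these nodes, at words $\psi_1,\psi_2$, satisfy $\delta_N(\psi_1)=\delta_N(\psi_2)$. By incompatibility-preservation together with Lemma~\ref{lem:aux1} the two corresponding locations cannot be incompatible; in particular (basis property~(1)) they cannot both come from $B$. Hence if $R=\emptyset$ we already have a contradiction, and otherwise exactly one of the following two cases occurs.

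Case~1: both nodes lie in $R$. This is literally the lasso-excision of Theorem~\ref{thm:main_no_cover}: monotonicity of $R$ supplies the context inclusion, excising the segment between the two coinciding positions yields a word $\alpha'\in\LA$ distinguishing $M$ and $N$ (the four claims of that proof go through verbatim), and $\alpha'$ is strictly shorter than $\alpha$. Case~2: one node $\sfrac{c}{\varphi}$ lies in $R$ and one node $\sfrac{a}{\psi}$ lies in $B$. The two locations are compatible, so basis property~(2) gives $a\sqsupseteq c$, hence $\LA(c)\subseteq\LA(a)$. Writing $\varphi\omega=\beta$ (so $b\in\Delta_A(c,\omega)$ since $\sfrac{c}{\varphi}\preceq\sfrac{b}{\beta}$), put $\alpha'=\psi\omega\gamma$. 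Then $\omega\gamma\in\LA(c)\subseteq\LA(a)$ and $a\in\Delta_A(\psi)$ give $\alpha'\in\LA$; and because $\psi,\varphi\in E$, $M\sim_E N$, $\delta_N(\varphi)=\delta_N(\psi)$, and compatibility of the two locations forces $\lambda_M(\delta_M(\varphi),\omega\gamma)=\lambda_M(\delta_M(\psi),\omega\gamma)$, the same chaining of (in)equalities as in the basic proof yields $\lambda_M(\alpha')\neq\lambda_N(\alpha')$. In either case $\alpha'$ distinguishes $M$ and $N$, hence $\alpha'\notin E$ automatically.

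The main obstacle is choosing $\mu$ so that the step strictly decreases in both cases at once. Word length strictly drops in Case~1, but the ``splice through the cover'' of Case~2 may lengthen the word (when $|\psi|>|\varphi|$); conversely the distance-to-$V$ quantity $|\alpha|_\SubV$ drops in Case~2 — because $\varphi$, lying on the $\leq_\SubV$-path to $\beta$ but outside $V$, is strictly past $\alpha$'s longest $V$-prefix — yet it can jump up in Case~1 when the excised segment destroys that $V$-prefix. The reconciliation relies squarely on the hypothesis $\beta\leq_\SubV\alpha$, which forces $\beta$, and with it both the excised segment and the spliced prefix, onto the canonical $\leq_\SubV$-path from $V$ to $\alpha$, so that every position touched is controlled relative to $V$; I expect $\mu$ to be a lexicographic combination of $|\alpha|_\SubV$ and $|\alpha|$. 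Verifying that this combined measure really decreases in the awkward sub-cases — in particular when $\beta\in V$, so that Case~1 may wreck the $V$-prefix while Case~2 stays harmless — is where the real work lies; the remainder is a routine transcription of the proof of Theorem~\ref{thm:main_no_cover}.
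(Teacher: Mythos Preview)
Your setup and case analysis mirror the paper's argument almost exactly: pick a minimal counterexample, extract a certificate at some node $\sfrac{b}{\beta}$, pigeonhole among the $k+1$ nodes of $R\cup B$, rule out the both-in-$B$ case via incompatibility-preservation and Lemma~\ref{lem:aux1}, then splice. The gap is that you explicitly leave the descent measure open, propose a lexicographic pair $(|\alpha|_\SubV,|\alpha|)$, and flag its verification as ``where the real work lies.'' No lexicographic trick is needed: the paper uses $|\alpha|_\SubV$ alone, and your worry that Case~1 ``may wreck the $V$-prefix'' is unfounded --- for exactly the reason you yourself give in Case~2.

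The observation you are not pushing far enough is that $R\preceq_\SubV\sfrac{b}{\beta}$ places \emph{every} ranking word on the $\leq_\SubV$-path to $\beta$, not just the one you splice against in Case~2. From $\zeta_1\leq_\SubV\beta$ there is no $V$-word strictly between $\zeta_1$ and $\beta$; since each $\zeta_j\in E\setminus V$, this forces $\zeta_1<_\SubV\cdots<_\SubV\zeta_\ell\leq_\SubV\beta\leq_\SubV\alpha$, and in particular $|\zeta_x|_\SubV<|\zeta_y|_\SubV$. Hence the longest $V$-prefix of $\alpha$ already sits below $\zeta_x$, survives the excision, and one computes $|\alpha|_\SubV=|\zeta_y|_\SubV+|\omega\gamma|>|\zeta_x|_\SubV+|\omega\gamma|\geq|\alpha'|_\SubV$. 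The identical computation with $|\psi|_\SubV=0$ handles Case~2. So $|\cdot|_\SubV$ strictly drops in both cases and the descent terminates. (Your isolated edge case $\beta\in V$ does obstruct the chaining $\zeta_y\leq_\SubV\beta\leq_\SubV\alpha\Rightarrow\zeta_y\leq_\SubV\alpha$; the paper passes through that transitivity step without singling it out, so this is a wrinkle common to both write-ups rather than a defect specific to yours.)
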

	
	The proof is similar to the one of 
	Theorem~\ref{thm:main_no_cover}.
	The main argument relies on showing that distinguishing sequences $\alpha$ outside of $E$ can be ``shrunk'' as well. The two main differences are that
	now the relevant measure of size
	is $|\alpha|_\SubV$ rather than $|\alpha|$, and that in the combinatorial arguments we exploit the
	sizes of certificates $(R,B)$, rather than those of rankings $R$, as before. 
	The full proof can be found at
	Appendix~\ref{ap:thm_main}.

	\section{Proposed Algorithms} \label{sec:algs}
	In this section we give high-level descriptions of two algorithms for the restricted conformance problem. 
	Let $M$ be an specification machine and $A$
	a context automaton, as before. We present two algorithms
	for the restricted conformance testing
	problem, dubbed
	\textsc{Simple} and 
	\textsc{Complex}, which use the theory
	developed so far. 
	Both procedures mainly differ in
	whether they attempt to exploit the language
	inclusion relation  over $S_A$. \par
	% Both variants make use of a class called 
	% \textbf{compatibility oracle}.

	% takes as inputs a specification model $M$, a context automaton $A$, and a bound $k$ for the size of the black-box. The second
	% variant takes as an additional input a binary relation $\sqsubseteq$ over $S_A$ that under-approximates language containment, as
	% considered during Section~\ref{sec:sufficient_conditions}. \par

	\subsection{Simple Variant}
	\label{sec:simple}
    Our procedure \textsc{Simple} uses a generalization of the concept of harmonized identifiers adapted to our context.
	A family of \textbf{harmonized identifiers} is given by 
	a set of words $W_{(s,a)}$ for each location $(s,a)\in S_{M\times A}$ satisfying (1) $W_{(s,a)}\subseteq \LA(a)$, (2)
	whenever $(s,a)\nsim (t,a)$ for some $(s,a),(t,a)\in S_{M\times A}$, some $\alpha\in W_{(s,a)}\cap W_{(t,a)}$ witnesses
	$(s,a)\nsim (t,a)$. Note that the sets $W_{(s,a)}$
	only need to distinguish $(s,a)$ from other locations 
	corresponding to the same context state $a$. 
	\par
	Algorithm~\ref{alg:first} shows the basic structure of \textsc{Simple}.
	The algorithm constructs a $k$-complete suite $E$ by successively adding various sequences to it. We assume $E$ to be prefix-closed throughout the exposition. Hence, whenever we include a test $\alpha$ in $E$, all its prefixes are implicitly added as well. We initialize the suite $E$ 
	to a cover $V$ of some core $Q$ (line~\ref{lin:basic_init_suite}). The routine 
	$\textsc{WeakCore}()$ simply selects one
	location $(s,a)$ from each equivalence class
	$S_{M\times A}/ \simeqq$, and $\textsc{Cover}(Q)$ explores $\LA$ in a breath-first fashion until all locations 
	in $Q$ have been visited. Afterwards, we 
	compute a family of harmonized identifiers $W_{(s,a)}$, and enlarge $E$ by appending
	them to suitable sequences $\alpha \in V$
	(line~\ref{lin:first_bg_dist_cover}).
	Finally we expand $E$ in a depth-first way
	starting from each word $\alpha_\SubV\in V$ (line~\ref{lin:fst_bg_expl}).
	\begin{algorithm}
		\caption{\textsc{Simple}($M,A,k$)}\label{alg:first}
		\hspace*{\algorithmicindent} \textbf{Input}
		A specification machine $M$, context automaton $A$, 
		and a bound $k$. \\
		\hspace*{\algorithmicindent} \textbf{Output} A $k$-complete
		suite $E$ for $M$ in the context of $A$.
		\begin{algorithmic}[1]
			\State $Q\gets \textsc{WeakCore}()$
			\State $V, \, toCvr \gets \textsc{Cover}(Q)$
			\Comment{
				\parbox[t]{.4\linewidth}{
					$toCvr$ is a map $Q \rightarrow V$ where $(s,a)\in \Delta_{M\times A}(toCvr(s,a))$
				}
			} 
			\State $E\gets V$ \label{lin:basic_init_suite}
			\State $ \{W_{(s,a)}\}_{(s,a)} \gets$ family of harmonized identifiers \label{ln:bs_harm}
			% \Comment{The suite $E\subseteq \LA$ under construction} 
			% \State $cov\_map\gets \{ \}$ \Comment{A map from locations
				% $q\in Q$ to words $\alpha\in V$}
			\State \textbf{for all } $(s,a) \in Q$ \textbf{ do }
			$E\gets E\cup \alpha\,W_{(s,a)}$, where $\alpha\coloneqq toCvr(s,a)$ 
			\label{lin:first_bg_dist_cover}{\tiny {\tiny }}			
			\State \textbf{for all} $\alpha \in V$ \textbf{do}
			$\alpha_\SubV\gets \alpha$, and $\textsc{Explore}(\emptyword)$ \label{lin:fst_bg_expl}
			\State \Return $E$
		\end{algorithmic}
	\end{algorithm}

	The final depth-first exploration
	carried out in the routine \textsc{Explore}$(\beta)$, shown in Algorithm~\ref{alg:explore1}. The search
	conducted in a recursive manner starting from $\alpha_\SubV$. This is
	done by expanding a candidate suffix $\beta$ successively. For this purpose, we examine each possible continuation $\alpha_\SubV\beta i$ 
	and determine whether the search space can be pruned at that point. We decide to stop exploring from
	$\alpha_\SubV\beta i$ if the sequence can be made $(k+1)$-redundant by adding suitable distinguishing sequences. This is done a big enough redundancy certificate for each node $\sfrac{a}{\alpha_\SubV\beta}\in \Gamma$ via
	$\textsc{SearcCerts}(\beta)$, and
	making $E$ incompatibility preserving w.r.t. 
	these certificates in $\textsc{ExploitCert}(R,B)$. We give a more detailed view of those steps. \par
	\begin{algorithm}
		\caption{\textsc{Explore}($\beta$)}
		\label{alg:explore1}
		\hspace*{\algorithmicindent} \textbf{Input} a suffix $\beta$ with
		$\alpha_\SubV\beta\in \LA$.
		\begin{algorithmic}[1]
			\ForAll{ inputs $i\in I_M$ with $\alpha_\SubV\beta i \in \LA \setminus V$}
			% \If{ $\alpha\beta i \in V$ }
			% \State \textbf{continue}
			% \EndIf
			% \State $A\_macrostates.push( \Delta_A(\alpha\beta i))$
			% \State $M\_states.push( \delta_M(\alpha\beta i))$
			
			\State $Certs \gets \textsc{SearcCerts}(\beta i)$.
			\If{$Certs \neq false$}
			\State add $\alpha_\SubV \beta i$ to $E$
			\State \textbf{for all }$(R,B)\in Certs$ \textbf{do}	
			\textsc{ExploitCert}$(R,B)$
			\Else \,\,  \textsc{Explore}$(\beta i)$

			\EndIf
			% \State $A\_macrostates.pop()$
			% \State $M\_states.pop( )$
			\EndFor
			\State \Return
		\end{algorithmic}
	\end{algorithm}
	
	The function \textsc{SearcCerts}$(
	\beta)$, shown in Algorithm~\ref{alg:search_certs}, 
	attempts to find a redundancy certificate $(R_a,B_a)$ satisfying $|R_a|+|B_a| = k + 1$ for each 
	node $\sfrac{a}{\alpha_\SubV\beta}\in \Gamma$.
	If it succeeds, the family of certificates $(R_a, B_a)$ is returned.
	Otherwise, it just returns $false$.
	The search of a certificate
	$(R_a,B_a)$ for a node $\sfrac{a}{\alpha_\SubV\beta}$ is divided in two stages. 
	First, a set $Rankings$ of candidate rankings satisfying $R\preceq_\SubV 
	\sfrac{a}{\alpha_\SubV\beta}$ is constructed
	via $\textsc{BuildRankings}(\beta,a)$. Afterwards, for each
	ranking $R\in Rankings$ we find a suitable
	basis using the routine $\textsc{Basis}(R)$, and we check whether  
	$|R|+|\textsc{Basis}(R)|\geq k+1$. \par

	\begin{algorithm}[tbh]
		\caption{\textsc{SearchCerts}$(\beta)$} \label{alg:search_certs}
		\textbf{Input} a suffix $\beta$ with
		$\alpha_\SubV\beta\in \LA$.\\
		\textbf{Output} a set $Certs$ of redundancy certificates for $\alpha_\SubV\beta$, or $false$
		\begin{algorithmic}[1]
			\State $Certs \gets \{ \}$
			\ForAll{$a\in \Delta_A(\alpha_\SubV\beta)$}
			% \State $\Omega \gets$  set of 
			% nodes $\sfrac{c}{\varphi}\preceq_\SubV (\alpha\beta,a)$.
			\State $Rankings \gets \textsc{BuildRankings}(\beta,a)$
			\If{there for some $R\in Rankings$ with $|R|+|\textsc{Basis}(R)| \geq k+1$} \label{lin:certs_if}
			\State add $(R,\textsc{Basis}(R))$ to $Certs$
			\Else \,\, \Return $false$ 
			\EndIf
			\EndFor
			\State \Return $Certs$.
		\end{algorithmic}
	\end{algorithm}
	
% 	The main difference between this basic variant and the advanced one, that
% 	we introduce later, lies in the methods $\textsc{BuildRankings},
% 	\textsc{Basis},\textsc{ExploitCert}$. Here we know nothing about
% 	the language containment relation in $S_A$, so we are restricted
% 	to dealing with flat rankings and certificates. This restriction, in principle,
% 	makes it harder to prove tests redundant, but simplifies various tasks
% 	arithmetically. 
	In this variant, $\textsc{BuildRankings}(\beta,a)$
	builds a family of flat rankings through a linear 
	scanning of the nodes
	$\sfrac{c}{\varphi}\preceq_\SubV \sfrac{a}{\alpha_\SubV\beta}$. 
	Given a flat ranking $R\coloneqq
	(\sfrac{c}{\varphi_i})_{i=1}^\ell$, for a fixed $c\in S_A$, the method $\textsc{Basis}(R)$ constructs a basis
	for $R$ simply by finding all
	locations of the form $(s,c)$ in the core $Q$. Finally, 
	the function $\textsc{ExploitCert}(R,B)$
	is tasked with making $E$ incompatibility preserving w.r.t. a given
	flat certificate $(R,B)$ by adding several distinguishing
	sequences to $E$. 
% 	Here we assume that
% 	all nodes $\sfrac{c}{\varphi}\in C\cup B$ correspond to a fixed
% 	state $c\in S_A$. 
% 	For each $\sfrac{c}{\varphi}\in C$, 
% 	$\textsc{ExploitCert}$ adds the sequences $\varphi W_{(s,c)}$ to $E$,
% 	where $s\coloneqq \delta_M(\varphi)$. This not only makes $E$ incompatibility-preserving 
% 	w.r.t. $R$, but also w.r.t. the whole $R\cup B$. This occurs because suitable 
% 	distinguishing suffixes for the basis $B$ have been already added in 
% 	Algorithm~\ref{alg:first}, line~\ref{lin:first_bg_dist_cover}. \par
	
	\begin{algorithm}[tbh]
		\caption{\textsc{BuildRankings},
		\textsc{Basis}, \textsc{ExploitCert}
		 \hfill (\textsc{Simple}'s version)} \label{algs:basic_version}
		\begin{algorithmic}[1]
			\Procedure{BuildRankings}{$\beta,a$}
			\Statex \textbf{Input} a suffix $\beta$ with
			$\alpha_\SubV\beta\in \LA$, and a state
			$a\in \Delta_A(\alpha_\SubV\beta)$ 
			\Statex \textbf{Output} A set $Rankings$ of
			constant rankings $R\preceq_\SubV \frac{a}{\alpha_\SubV\beta}$.
			\State $Rankings \gets \{ \}$
			\State initialize empty rankings
			$R_{b_1}, R_{b_2}, \dots$ for all 
			$b_i\in S_A$.
			\State $\Omega \gets $ set of nodes
			$\sfrac{c}{\varphi} \preceq_\SubV \sfrac{a}{\alpha_\SubV\beta}$.
			\ForAll{ $j = 1,2, \dots, |\beta|$, and \textbf{all} $(\alpha_\SubV\cdot\beta_{\leq j},b)\in 
				\Omega$}
			\State append $\sfrac{b}{\alpha_\SubV\cdot\beta_{\leq j}}$, to $R_b$.
			\EndFor
			\State $Rankings\gets \{R_{b}\}_{b\in S_A}$
			\EndProcedure
			\Statex
			\Procedure{Basis}{$R$}
			\Statex	\textbf{Input} A flat ranking $R=(\sfrac{c}{\varphi_j})_{j=1}^\ell \subseteq \Gamma$, for some $c\in S_A$.
			\Statex \textbf{Output} A basis $B$ for $R$.
			\State $B \gets \{ \}$
			\State \textbf{for all }$(s,c) \in Q$, add $(toCvr(s,c),c)$ to $B$
			\State \Return $B$
			\EndProcedure
			\Statex
	
			\Procedure{ExploitCert}{$R,B$}
			\Statex A flat redundancy certificate $(R,B)$.
			\ForAll{$\sfrac{c}{\varphi} \in C$}
			\State $s\gets \delta_M(\varphi)$
			\State add $\varphi W(s,c)$ to $E$.
			\EndFor
			\EndProcedure			
		\end{algorithmic}
	\end{algorithm}

	\subsection{Complex Variant}
	\label{sec:complex}
	The basic structure of the method \textsc{Complex} is 
	is largely 
	similar that of \textsc{Simple}. The
	main difference is that \textsc{Complex} 
    takes an additional parameter $\sqsubseteq$,
    which is an under under-approximation
    of language inclusion over $S_A$.
    The goal of \textsc{Complex}
    is to
    exploit $\sqsubseteq$ to 
    obtain a possibly more reduced
	suite than \textsc{Simple}.
	The detailed description of the algorithm is mostly technical
	in nature an can be found in Appendix~\ref{ap:complex}. 
	\textsc{Advanced} uses $\sqsubseteq$ 
	two main different ways: 
	(1) It uses $\sqsubseteq$ for
	computing the core $Q$, yielding
	a possibly smaller initial cover than \textsc{Simple}. (2) It uses $\sqsubseteq$ to search for non-flat chains and certificates.
	This potentially allows
	\textsc{Advanced} to prune the exploration process space earlier than \textsc{Simple}. \par
	The procedure however, shows two
	main disadvantages with respect
	to the simpler variant. The first
	is that searching for general certificates costs more
	time than searching just for
	flat ones, as \textsc{Simple} does.
	The second is that making a suite
	incompatibility-preserving w.r.t.
	general certificates requires more
	involved strategies for adding distinguishing suffixes. Here the idea of using harmonized identifiers does not work,
	as one needs to distinguish locations
	$(s,a),(t,b)$ for $a\neq b$,
	and \textsc{Complex} potentially adds more distinguishing sequences, or longer ones. 
	
	\subsection{Complexity Bounds} \label{sec:bounds}
	In this section we study the complexity of our procedure
	\textsc{Simple} both in terms of
	time and sizes of the output suites.
	We also	give the related expressions for \textsc{Complex}.
	Two notable aspects come out from of this
	analysis. One is that our methods
	avoid the addition of exponential-length
	tests, issue which the
	State-Counting approach
	\cite{petrenkoTestingPartialDeterministic2005} suffered from. The second is that our
	proposed techniques
	spend polynomial
	time in the generation of each test
	sequence, unlike the SAT-based approach
	from \cite{petrenkoLearningCommunicatingState2019a}.
	\par
	Fix $M,A,k$, with $k\geq |S_M|$.
	First we sketch 
	a bound for the total number of tests in the suite
	\textsc{Simple}$(M,A,k)$. 
	Let $n_{(M\times A)} \coloneqq|\sfrac{S_{M\times A}}{\cong}|$.
	The core $Q$ contains a location
	$(s,a)$ for each class in  $\sfrac{S_{M\times A}}{\cong}$. Thus, $|Q|\leq 
	n_{M\times A}\leq |S_M||S_A|$, and a cover $V$ for $Q$ contains at most
	$|S_M||S_A|$ words.
% 	Let $\{W_{(s,a)}\}_{(s,a)}$ 
% 	be a family of harmonized identifiers. 
% 	Each one of these sets $W_{(s,a)}$ contains at most
% 	$S_M$ words, as it potentially needs a sequence to distinguish 
% 	$(s,a)$ from every other $(t,a)$ with $t\in S_M$. Hence, adding harmonized identifiers
% 	to the cover $V$ as in Algorithm~\ref{alg:first}, line~\ref{lin:first_bg_dist_cover}, 
% 	yields at most $n_{M\times A}|S_M|$ tests.
    Now we give a 
    bound the depth of the exploration process carried out in \textsc{Explore}. 
	The following result refers 
	to the scope of \textsc{Simple}. Its proof
	can be found at Appendix~\ref{ap:length}
	
	\begin{theorem} 
	\label{thm:length}
	Fix $\alpha_V\in V$. Let $\beta$ be a suffix
	with $\alpha_V\beta\in \LA$ and $|\beta|=k|S_A|-
	n_{M\times A} + 1$
	Then the method \textsc{SearchCerts}$(\beta)$ does not
	return $false$.
	\end{theorem}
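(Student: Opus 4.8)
The plan is to unfold the definition of \textsc{SearchCerts} (Algorithm~\ref{alg:search_certs}) and reduce the statement to a pigeonhole count. By construction, \textsc{SearchCerts}$(\beta)$ returns $false$ precisely when there is some $a\in\Delta_A(\alpha_V\beta)$ for which no ranking $R$ produced by \textsc{BuildRankings}$(\beta,a)$ satisfies $|R|+|\textsc{Basis}(R)|\geq k+1$. So I would fix an arbitrary $a\in\Delta_A(\alpha_V\beta)$ and exhibit such a ranking; repeating the argument for every such $a$ then shows that the whole family $Certs$ is assembled and returned.

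Next I would populate the flat rankings $R_b$ ($b\in S_A$) that \textsc{BuildRankings}$(\beta,a)$ constructs (Algorithm~\ref{algs:basic_version}). Recall that it scans the prefixes $\alpha_V\beta_{\leq j}$, $j=1,\dots,|\beta|$, and appends $\sfrac{b}{\alpha_V\beta_{\leq j}}$ to $R_b$ whenever this node lies in $\Omega=\{\sfrac{c}{\varphi}\mid \sfrac{c}{\varphi}\preceq_\SubV \sfrac{a}{\alpha_V\beta}\}$. Fix one run $\rho$ of $A$ on $\alpha_V\beta$ ending at $a$ (it exists since $a\in\Delta_A(\alpha_V\beta)$), and let $b_j$ be the state $\rho$ reaches after reading $\alpha_V\beta_{\leq j}$, so $b_{|\beta|}=a$. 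The crux is to check that $\sfrac{b_j}{\alpha_V\beta_{\leq j}}\in\Omega$ for every $1\leq j\leq|\beta|$: the initial segment of $\rho$ gives $b_j\in\Delta_A(\alpha_V\beta_{\leq j})$, its tail gives $\sfrac{b_j}{\alpha_V\beta_{\leq j}}\preceq\sfrac{a}{\alpha_V\beta}$, and $\alpha_V\beta_{\leq j}\leq_\SubV\alpha_V\beta$ because (within the scope of \textsc{Simple}) \textsc{SearchCerts} is only invoked from \textsc{Explore}, whose recursion descends exclusively through inputs with $\alpha_V\beta i\in\LA\setminus V$; hence every prefix of $\alpha_V\beta$ properly extending $\alpha_V$ lies outside $V$, so $\alpha_V$ is the last element of $V$ weakly below $\alpha_V\beta$. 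Consequently $R_{b_j}$ receives the node $\sfrac{b_j}{\alpha_V\beta_{\leq j}}$, and since distinct indices yield distinct nodes, $|R_b|\geq|\{\,j\in\{1,\dots,|\beta|\}\mid b_j=b\,\}|$ for every $b\in S_A$.

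Finally I would count. \textsc{Basis}$(R_b)$ returns one node $(toCvr(s,b),b)$ per core location $(s,b)\in Q$, and since $s\mapsto toCvr(s,b)$ is injective this set has size $q_b\coloneqq|\{\,s\in S_M\mid(s,b)\in Q\,\}|$; as \textsc{WeakCore} retains exactly one location per $\cong$-class, $\sum_{b\in S_A}q_b=|Q|=n_{M\times A}$. Summing the previous estimate over $b$ (each index $j$ being charged to the single bucket $b_j$) gives $\sum_{b\in S_A}|R_b|\geq|\beta|$, hence
\[
\sum_{b\in S_A}\bigl(|R_b|+|\textsc{Basis}(R_b)|\bigr)\ \geq\ |\beta|+n_{M\times A}\ =\ k\,|S_A|+1 .
\]
A sum of $|S_A|$ non-negative integers strictly exceeding $k\,|S_A|$ must have a summand at least $k+1$, and the corresponding $R_b$ is the ranking we seek. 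The only genuinely delicate point in this plan is the middle step --- checking that the $\leq_\SubV$ condition really holds for every node along $\rho$, i.e.\ that $\alpha_V$ is the last $V$-sequence on the path to $\alpha_V\beta$; the rest is bookkeeping, driven by the clean identity $|\beta|+n_{M\times A}=k\,|S_A|+1$.
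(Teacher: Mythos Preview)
Your proposal is correct and follows essentially the same approach as the paper's proof: fix a run to $a$, observe that its $|\beta|$ intermediate nodes populate the flat rankings $R_b$, note that $|\textsc{Basis}(R_b)|$ equals the number $m_b$ of core locations with second component $b$ (and $\sum_b m_b=n_{M\times A}$), and conclude by pigeonhole that some $b$ satisfies $|R_b|+|\textsc{Basis}(R_b)|\geq k+1$. The only cosmetic difference is that the paper phrases the pigeonhole per bucket (if every $b$ had count at most $k-m_b$ the total would fall short of $|\beta|$), whereas you sum first and then extract a large summand; and you are more explicit than the paper about why the $\leq_\SubV$ condition holds along the run, which is a point the paper leaves implicit.
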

	
	Let $e\coloneqq k|S_A| - n_{M\times A}$. The
	parameter $e$ plays a similar role
	in this analysis
	to the number of extra states in traditional
	conformance testing. 
	Last result shows that the $\textsc{Explore}$ in the worst case may
 	add possible suffixes $\beta$ of size $e + 1$ to each word $\alpha_\SubV \in V$.
 	This yields potentially $|S_M||S_A||I_M|^{e+1}$ sequences
 	of the form $\alpha_\SubV\beta$.
 	For each of these, \textsc{Simple} appends
 	appends potentially $|S_A|$ identifiers
 	$W_{(s,a)}$, either during its initial phase
 	or during $\textsc{ExploitCerts}$. 
 	This yields an upper bound of
%  	Putting everything together, we obtain that the suite returned by
%     $\textsc{Simple}$ contains  
    $|S_A|^2|S_M|^2 |S_M| |I_M|^{e + 1}|$
    tests in the suite returned by \textsc{Simple}.
    % Using that $n_{M\times A}\leq |S_M||S_A|$, we get the nicer expression 
    % $O\big(|S_M|^2|S_A|^2|I_M|^{e+1}\big)$ sequences.
    \par
    
    To obtain the total number of symbols produced
    by $\textsc{Simple}$ we multiply last 
    bound by the maximum size of a test
    in the suite. Without loss of generality,
    tests generated in $\textsc{Simple}$
    are of the form
    $\alpha_\SubV \beta \gamma$, where 
    $\alpha_\SubV$ belongs to the cover $V$,
    $\beta$ is an arbitrary suffix with $|\beta| \leq e + 1$, and $\gamma$ is a distinguishing sequence belonging
    to some haromonized identifier
    $W_{(s,a)}$. Clearly, $|\alpha_\SubV|\leq
    |S_A||S_M|$, and using
    Theorem~\ref{thm:seq} yields
    $\gamma\leq |S_A||S_M|$ as well.
    Putting everything together we get $|\alpha_\SubV \beta \gamma|
    \leq 3|S_A|k$. This gives us a bound expression of $O(k|S_A|^3|S_M|^2 |I_M|^{e + 1}|$ symbols generated in \textsc{Simple}. \par
    We note that the bounds obtained 
    for \textsc{Simple} are optimal, in the 
    sense that whenever $A$ is the universal
    NFA with one state, we recover the bounds
    for the W-method, discussed in Section~\ref{sec:classic_testing}. The
    time cost of analysis of \textsc{Simple}
    can be gotten from examining the 
    routines \textsc{SearchCerts} and
	\textsc{ExploitCert}. This can be seen
	in more detail in Appendix~\ref{ap:cost}.
	The resulting time cost is
	$O((k|S_A|^3|S_M|^3 + |S_A|^4|S_M|)e|I_M|^{e+1})$. \par
	For completeness sake we briefly discuss
	the complexity analysis of \textsc{Complex}.
	The bounds for number of tests and symbols
	obtained for \textsc{Simple} also apply
	for this second variant following similar
	arguments. The time-cost of the procedure
	is covered in Appendix~\ref{ap:cost},
	and is given by $O((k|S_A|^3|S_M|^3 + |S_A|^5|S_M|)e|I_M|^{e+1})$.

	\par

	\section{Experimental Results}
	
	Our proposed methods
	were motivated by the task of testing a 
	component with observable interfaces and non-controllable inputs. 
	During our experiments, we evaluated our 
	techniques on the problem of testing
	the tail $T$ of a cascade composition $T\circ H$ (Problem~\ref{prob:tail}).
	For this, we use the reduction
	described in Section~\ref{sec:tail}, which
	transforms the head $H$
	into a suitable NFA $A$. 
	We aim to answer the following questions:
	(1) How do our techniques compare against the 
	baseline method presented in
	Section~\ref{sec:tail}?
	(2) How do the sizes of the component machines and the number of extra states influence
	our methods? Finally, the theory
	developed in 
	Section~\ref{sec:theory}
	allows for a natural application
	of approximate techniques for
	NFA reduction and
	language-inclusion. Hence, our last 
	question is: (3) what kind of
	impact do those strategies have? 
	We describe now our experimental setup.
	Our benchmarks consist of randomly constructed cascades
	of Mealy machines, formed by a head $H$, and a tail $T$, 
	where $O_H=I_T$.
	We say a cascade is of size $n \times m$
	if $|S_H|=n$ and $|S_T|=m$.
    To construct
    the random benchmarks, we utilized
    the generator in FSMLib \cite{souchaFSMLib}, 
	which produces reduced connected Mealy machines with given alphabet sizes and number of states. 
	All experiments were run on an Intel Core i5-6200U (2.30GHz) machine with a limit of 4GB RAM memory,	and a time limit of $3$ minutes
	 \par

    In order to answer the first question, we
    implemented \textsc{Simple}  (Section~\ref{sec:simple}) and compared
    it against the testing of the composite
    machine described in Section~\ref{sec:tail}.
    To represent this baseline, 
    we applied the H-method \cite{dorofeevaImprovedConformanceTesting2005}
    on the composite machine $P$, using the implementation provided by FSMLib. \par
    
    	\begin{figure}[ht]
		\centering
		\subfloat[ Number of symbols.
		\label{fig:small_sym}]{
			\includegraphics[width=0.45\textwidth, valign=b]{./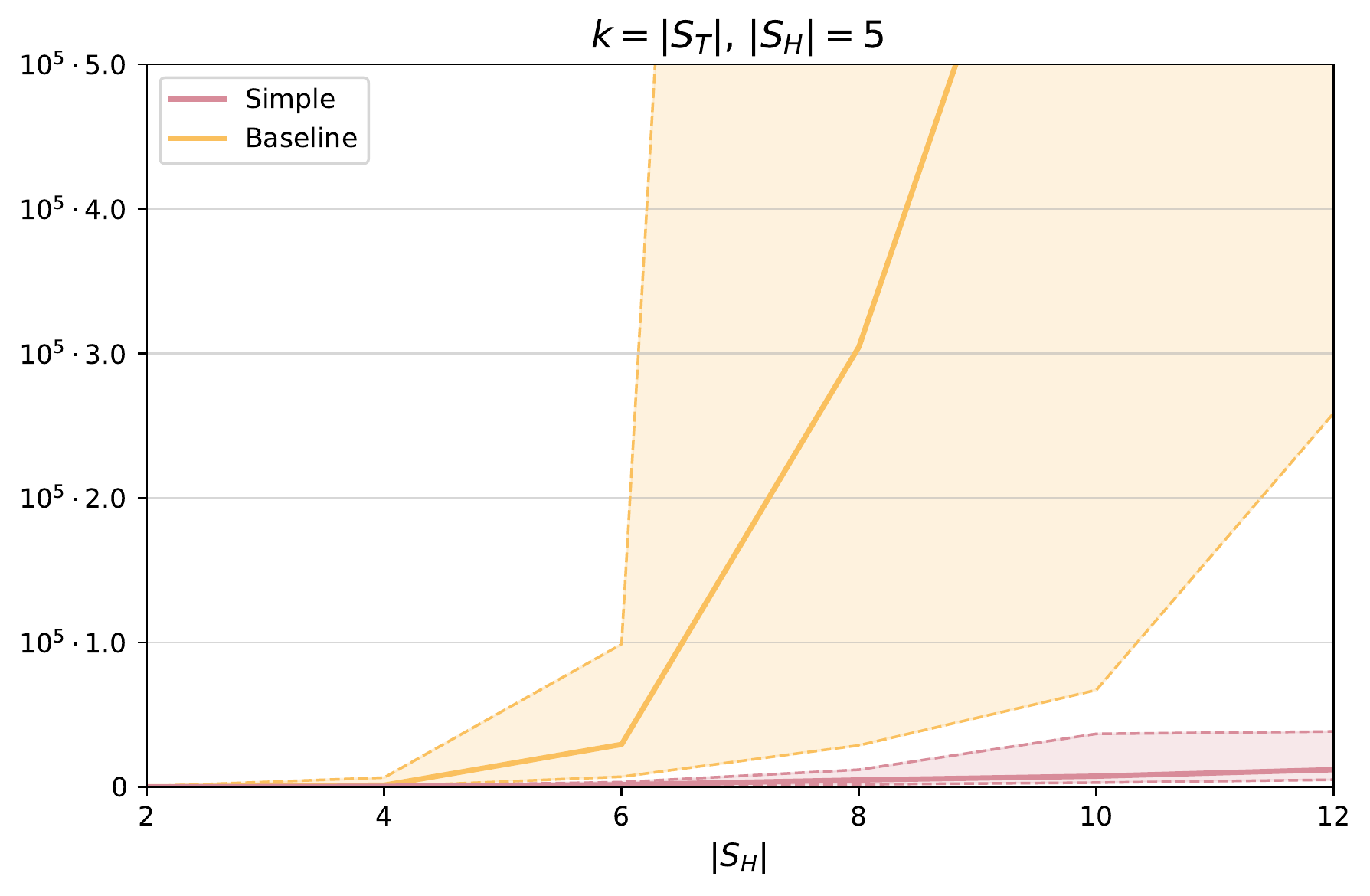}
		}
		\hfill
		\subfloat[Execution time (s).\label{fig:small_times}]{
			\includegraphics[width=0.45\textwidth,
			valign=b]{./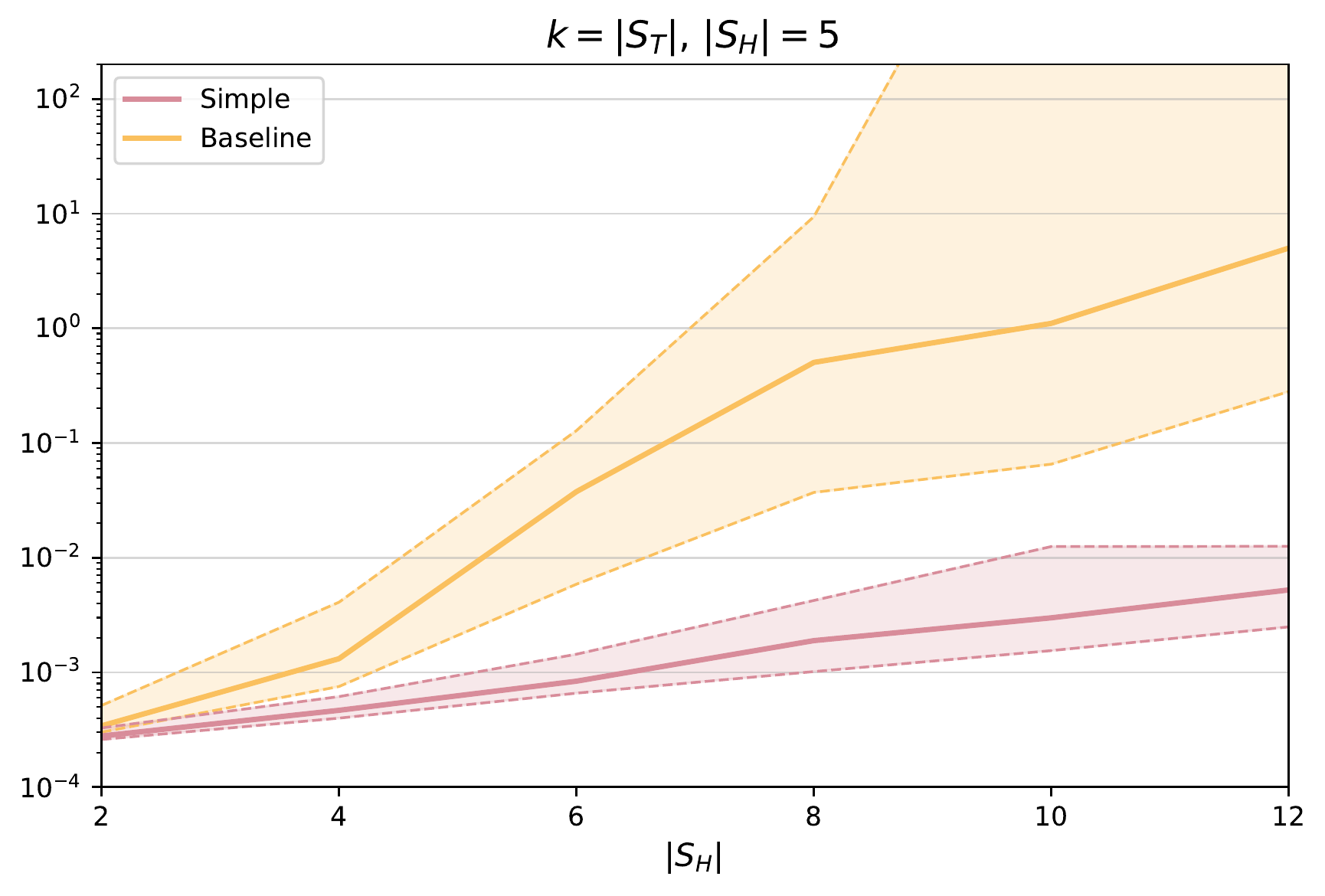}
		}
		\caption{Comparison between \textsc{Simple} and
		the testing of the composite machine.}
		\label{fig:short}
		
	\end{figure}
    
    In Figure~\ref{fig:short}, we compare total numbers of symbols and execution times
	for \textsc{Simple} and the baseline method. For each tail size $|S_T|=2,4,6,8,10,12$, we generated one hundred cascades where 
	$|S_H|=5$ and all alphabets were of
	size $4$. We considered no extra states
	in these experiments. That is, we aimed
	for $k$-complete suites for $T$, where 
	$k=|S_T|$. Solid lines in our graphs represent
	median quantities, and areas
	around those lines are enclosed
	by the 25-th and 75-th percentiles of their respective metrics.  
	We conclude that our proposed method, \textsc{Simple},
	greatly outperforms the testing of the composite machine	in both selected criteria. The main problem the baseline method
	encountered was the space limitation. Already with cascades
	of size $5\times 10$, $31\%$ of the experiments ran out of memory. The root cause of this was the blow-up of extra states, discussed in 
	Section~\ref{sec:tail}.	Among the
	$5\times 12$ benchmarks, the amount
	of extra states considered by the baseline
	was bigger than  $9$ a $27\%$ of the times.\par

	In order
	to study the potential benefits
	of NFA reduction and language-inclusion techniques, we 
    implemented an additional
    algorithm representing our best attempt at the
	gray-box testing problem. 
	Here, first 
	we optimize $A$'s with the approximate method
	implemented in the tool
	\textit{Reduce} \cite{clementeEfficientReductionNondeterministic2019a}. 
% 	This tool reduces the size of $A$ in an iterative manner,
% 	gluing states together once equivalence of their languages is established, and performing various heuristic operations along the way.
	Afterwards, we compute the so-called
	``look-ahead forward direct simulation relation'', introduced
	in \cite{clementeEfficientReductionNondeterministic2019a}, which gives us an under-approximation $\sqsubseteq$ of language inclusion over $S_A$. If this results
	in a trivial relation, we fall back to \textsc{Simple}. Otherwise, we try to exploit $\sqsubseteq$ by calling 
	\textsc{Complex} (Section~\ref{sec:complex}). 
	We dub this whole procedure \textsc{Advanced}.
	The amount of look-ahead used in both the tool \textit{Reduce} and the computation of $\sqsubseteq$ was set to $16$.
	 \par

	\begin{figure}[ht]
	\centering
	\subfloat[ Number of symbols.]{
		\includegraphics[width=0.45\textwidth,valign=b]{./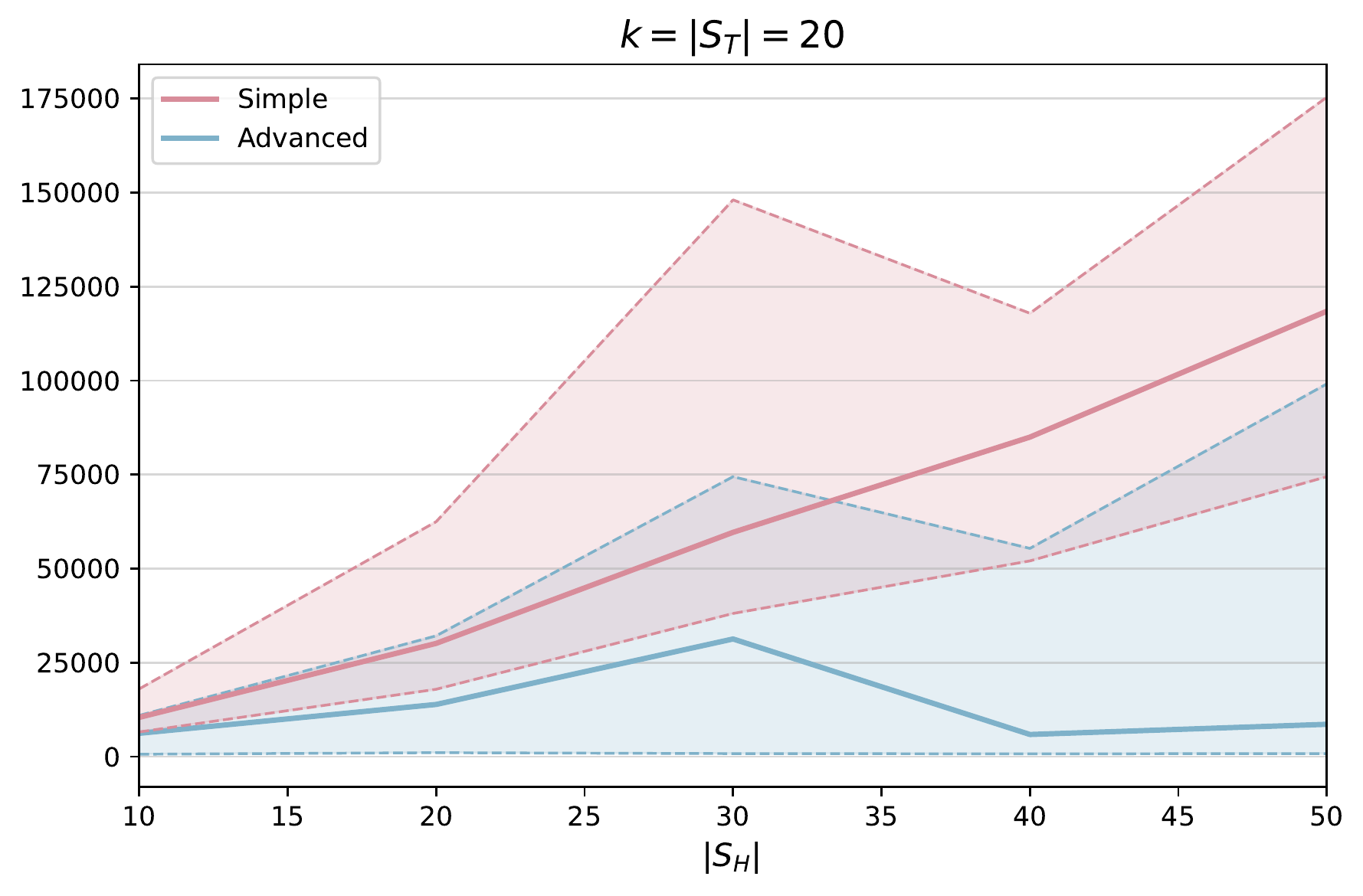}
	}
	\hfill
	\subfloat[Execution time (s)]{
	\includegraphics[width=0.45\textwidth,valign=b]{./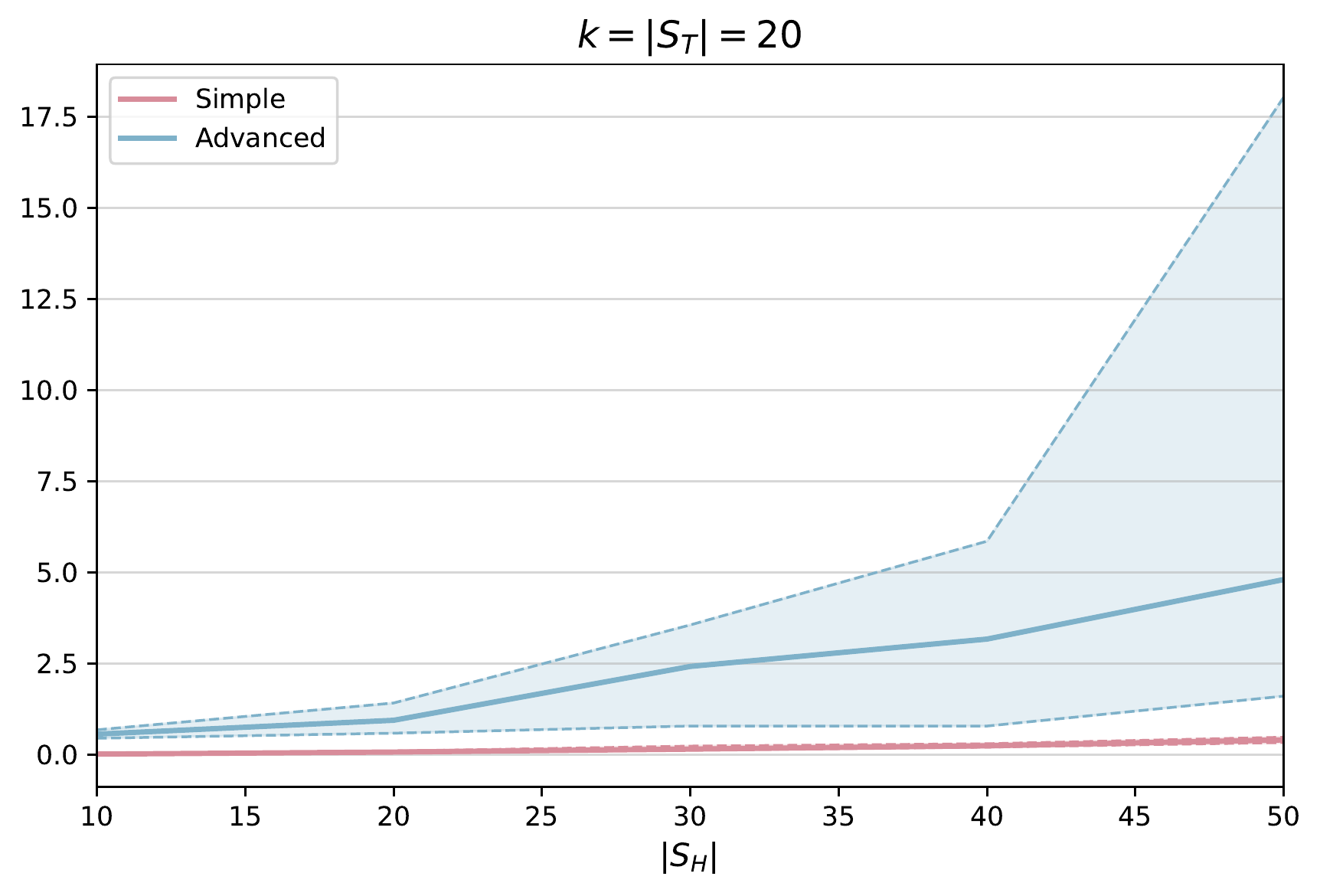}
}
\\
	\centering
	\subfloat[ Number of symbols.]{
		\includegraphics[width=0.45\textwidth,
		valign=b]{./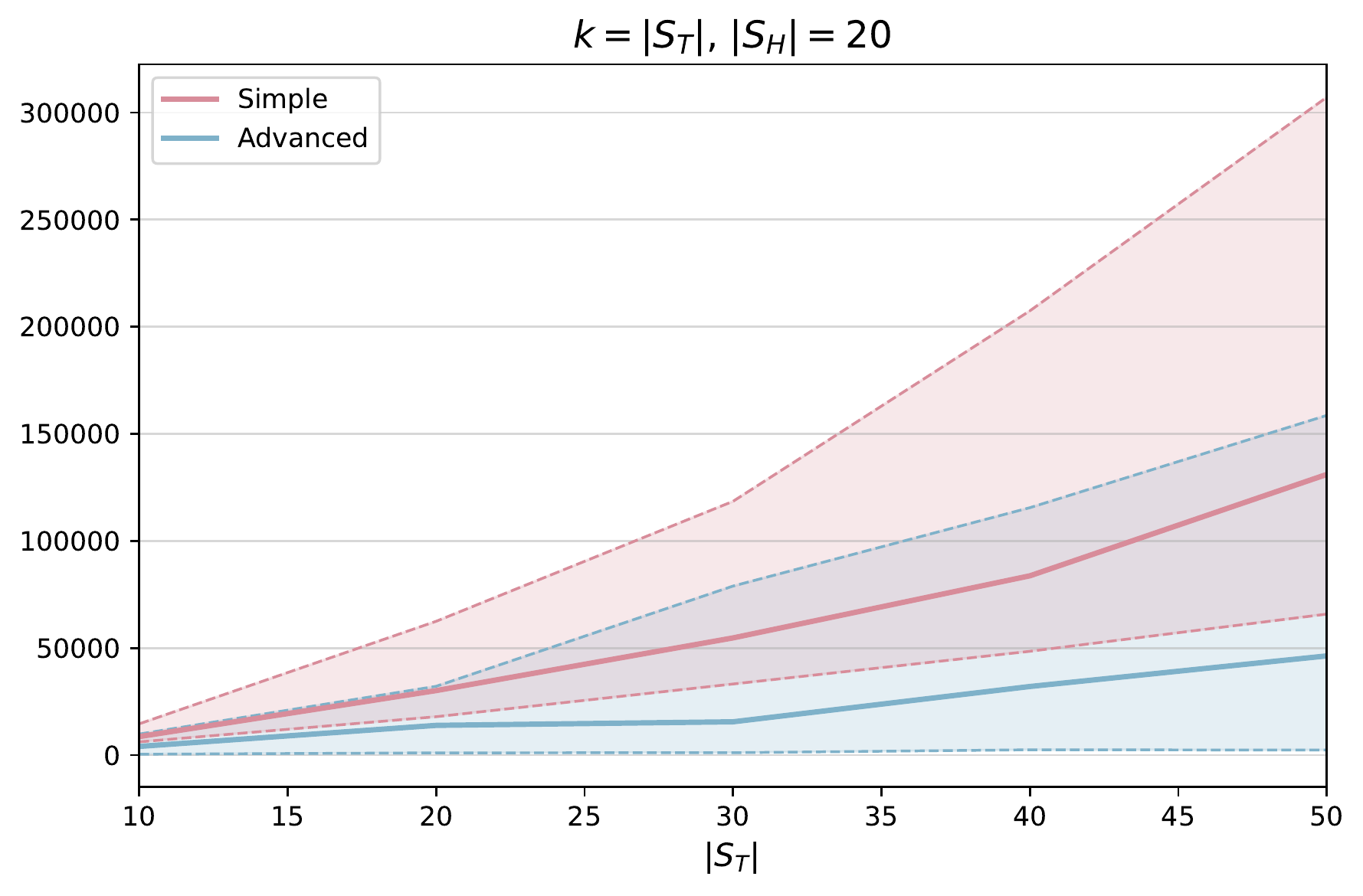}
	}
	\hfill
	\subfloat[Execution time (s)]{
		\includegraphics[width=0.45\textwidth,
		valign=b]{./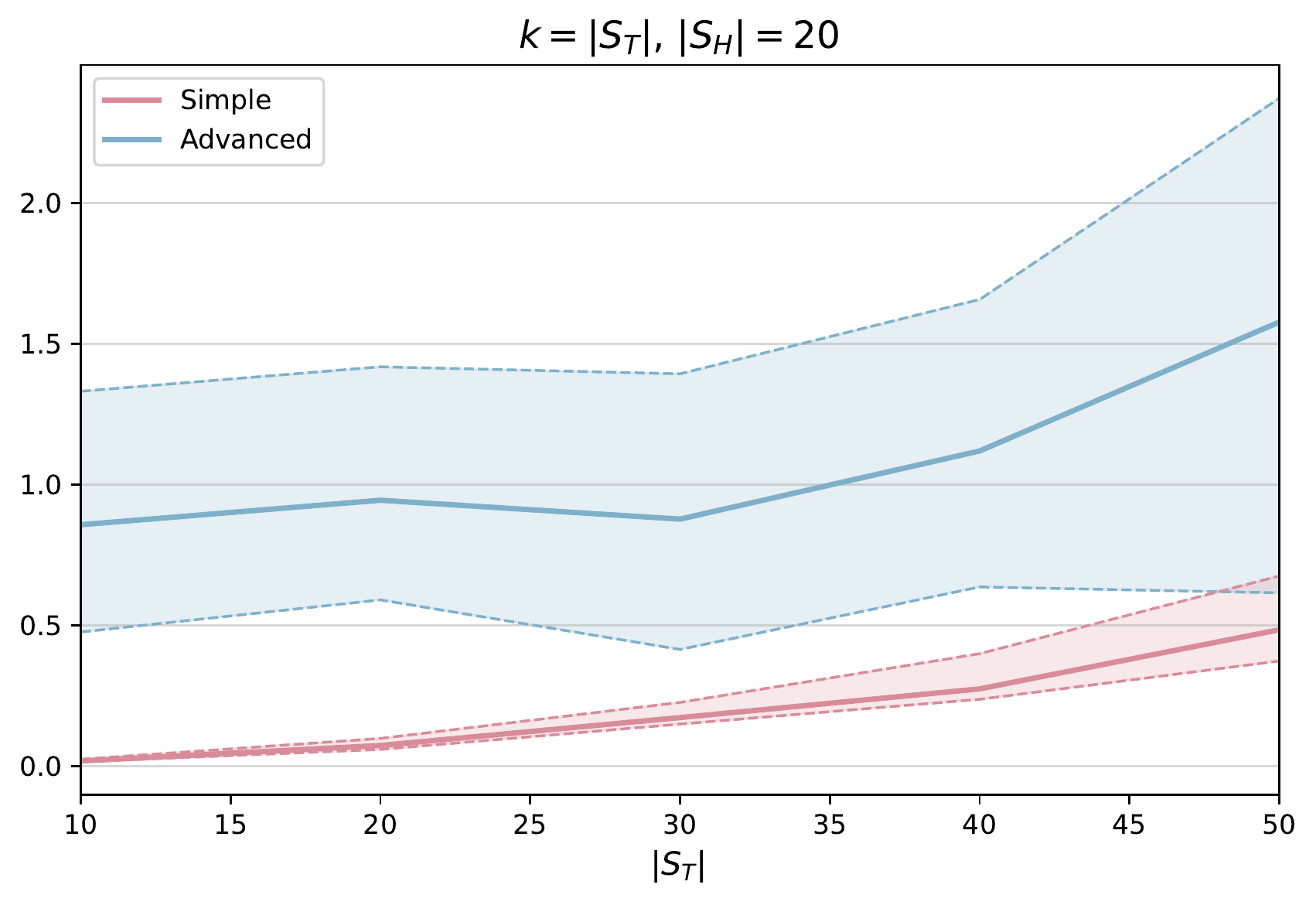}
	}
	\\
	\caption{Performance of our proposed algorithms
	with respect to the size of the head (upper row)
	and the tail (lower row).}
	\label{fig:main_batch}
	
\end{figure}
	To address the rest of our questions, we generated two additional batches of $500$ cascade compositions each.
	In the first, we fixed $|S_H|=20$,
	and generated $100$ benchmarks for each value
	$|S_T|=10,20,30,40,50$.
	In the second followed the same process
	with the roles of $|S_H|$ and $|S_T|$ reversed.
	In order to obtain automata $A$
	where minimization and language-inclusion
	techniques show interesting behaviour, we fixed 
	$|I_H|=6$, and $|I_T|=|O_T|=3$. Experimentally, for values $\sfrac{|I_H|}{|I_T|}$ smaller than
	two, we found those techniques to have no effect on $A$ in
	the majority of times, while for larger values $A$ is easily
	found to be universal. This is 
	consistent
	with the results in \cite{clementeEfficientReductionNondeterministic2019a}.\par
	
	Figure~\ref{fig:main_batch} displays the
	experimental data of \textsc{Simple} and 
	\textsc{Advanced} on this second set of benchmarks, with zero additional states under
	consideration. 
	We do not include the baseline here,
	as it yielded out of memory errors already in
	$80\%$ of $20\times 10$ and $10\times 20$ compositions. The general trend is that \textsc{Advanced} produces much smaller suites than \textsc{Simple} at the cost of a greater execution time. Both aspects of this 
    comparison are more pronounced when $H$ grows than when $T$ does so. We attribute these differences largely to the automata reduction step in \textsc{Advanced}. In 
    $90\%$ of the experiments, the minimization call was responsible $73.4\%$ of \textsc{Advanced}'s execution time, while on half the experiments this number ascends to $95.5\%$. It is worth pointing out that despite producing larger suites, \textsc{Simple} was able to complete the vast majority of the experiments ($927/1000$) in under a second. \par

\begin{figure}[ht]
	\centering
	\subfloat[ Number of symbols.]{
		\includegraphics[width=0.45\textwidth,valign=b]{./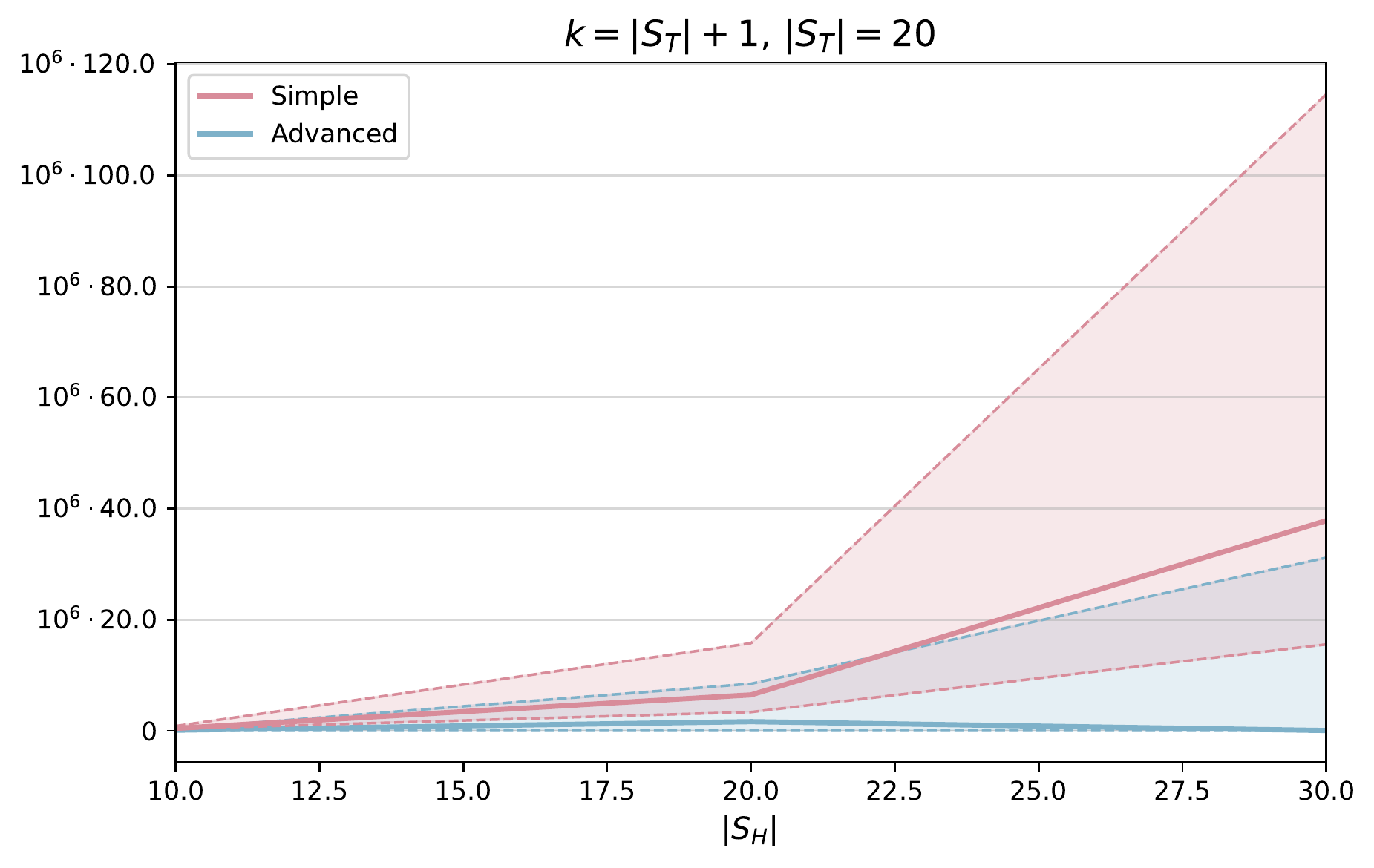}
	}
	\hfill
	\subfloat[Execution time (s)]{
	\includegraphics[width=0.45\textwidth,valign=b]{./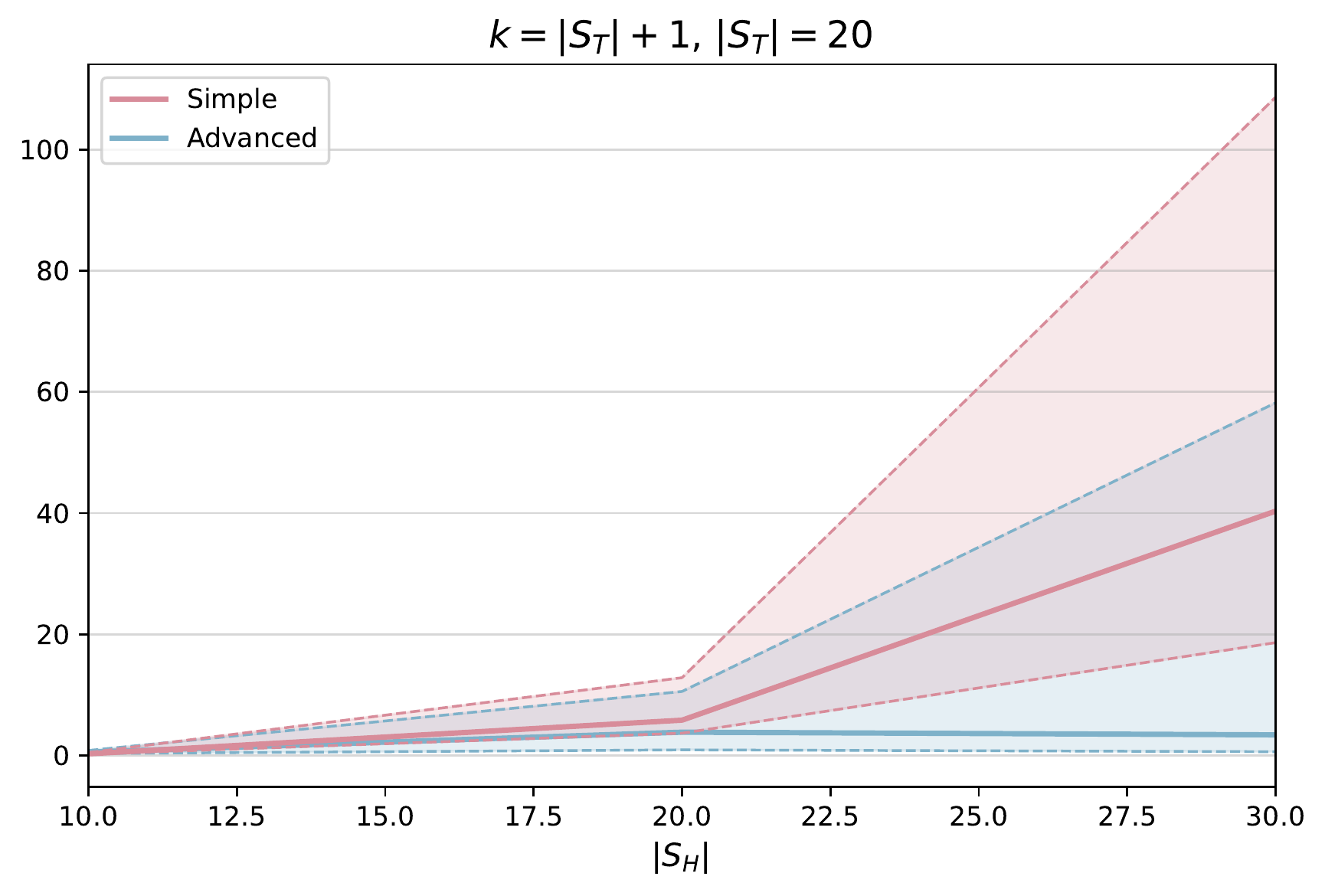}
}
\\
	\centering
	\subfloat[ Number of symbols.]{
		\includegraphics[width=0.45\textwidth,
		valign=b]{./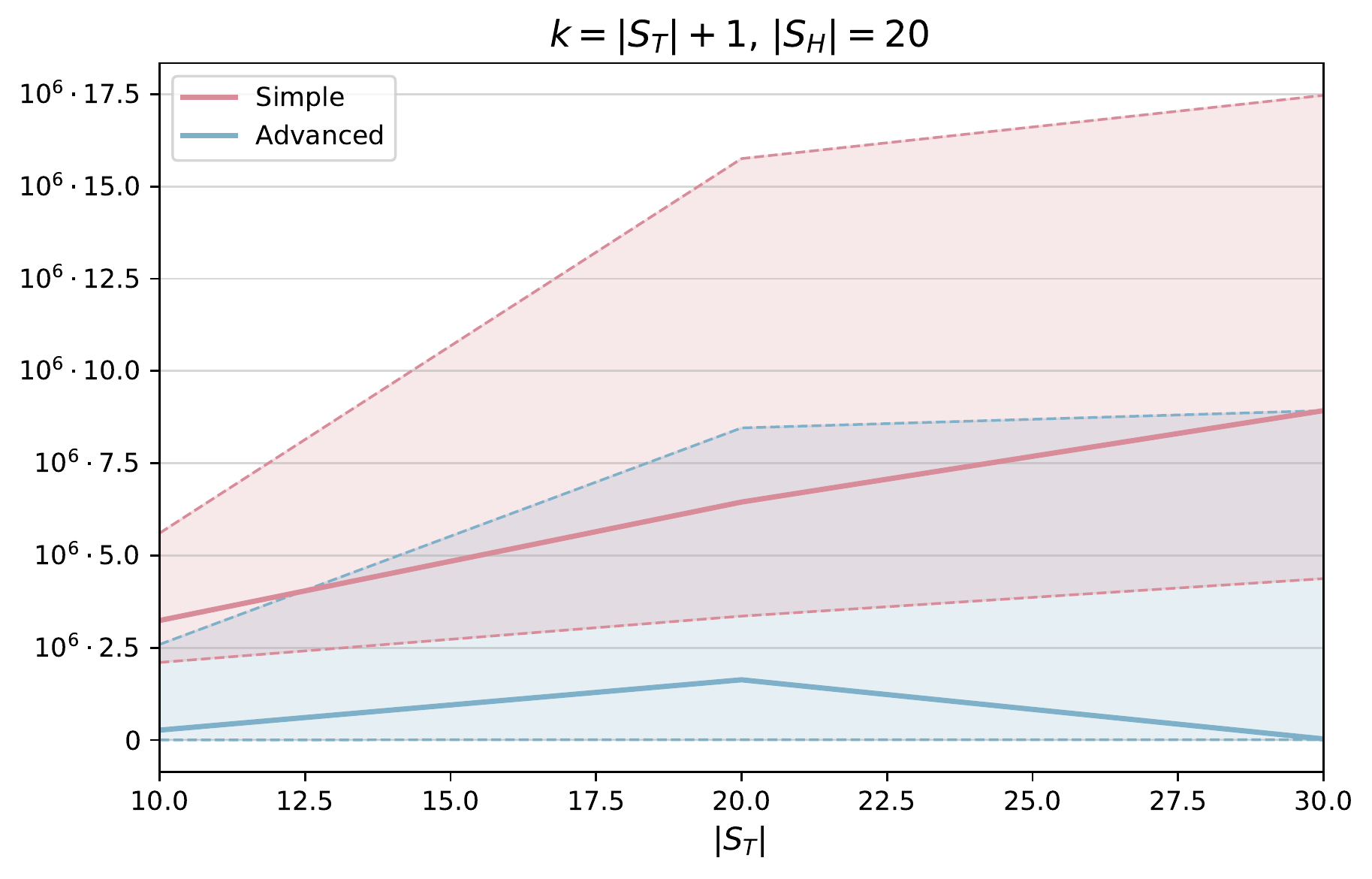}
	}
	\hfill
	\subfloat[Execution time (s)]{
		\includegraphics[width=0.45\textwidth,
		valign=b]{./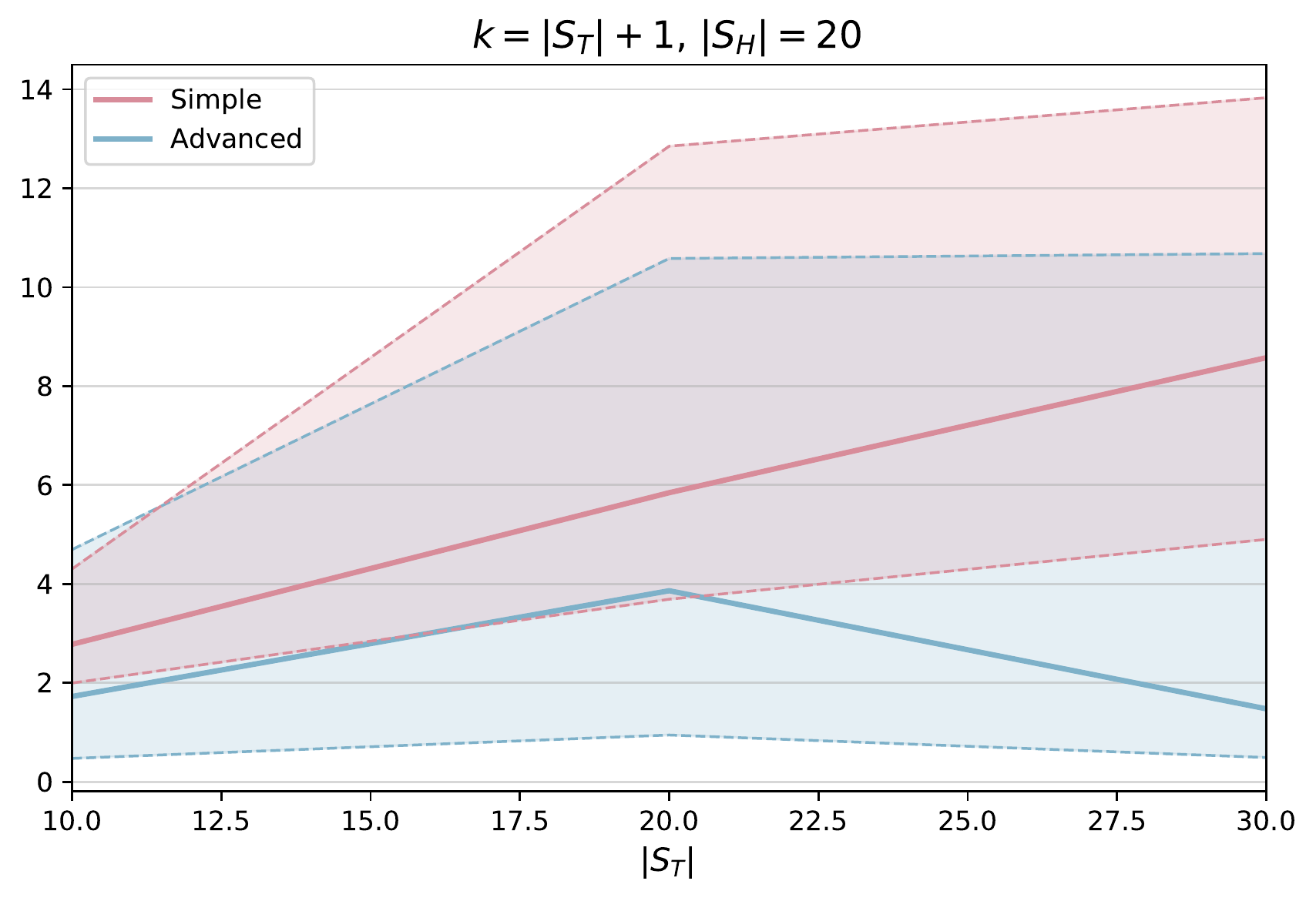}
	}
	\\
	\caption{Performance of our algorithms
	in the presence of extra states. }
	\label{fig:extra}
	
\end{figure}

For Figure~\ref{fig:extra}
we ran again a subset of the previous experiments, but considering one addditional extra state. Out of the original $1000$, we picked the $600$ cascades where head and tail had at most $30$ states.
Here \textsc{Advanced} 
outperforms \textsc{Simple} in both time and number of symbols. Moreover, minimization time
still accounted for a $67 \%$ of \textsc{Advanced}'s execution time in half of the occasions.  
In this case, the initial automata minimization
step seems clearly beneficial. The observed effect of the additional state is drastic both with respect
to execution times and suite sizes. Nevertheless, this impact is much smaller than what our worst-case analyses predict (Section~\ref{sec:bounds}). 
According to those, an additional state could worsen the metrics of both procedures by a factor of $|I_T|^{|S_H|}$. This ascends to around $35\cdot 10^8$ for $|I_T|=3$ and $|S_H|=20$. 
We note that this blowup is
unavoidable for black-box testing
techniques. 
However, the relative increase
between Figure~\ref{fig:main_batch}
and Figure~\ref{fig:extra} 
is not nearly as large.\par

Lastly, to evaluate the effect of the language inclusion relation on our algorithms, we implemented
an additional procedure \textsc{Simple+Reduce},
which just calls \textsc{Simple} after the initial
NFA reduction. We ran the experiments
of Figure~\ref{fig:main_batch}
and Figure~\ref{fig:extra} on this method,
and compared it against \textsc{Advanced}. 
We note that in about $60\%$ of the experiments
both methods performed the same operations, 
as the relation $\sqsubseteq$ obtained from $A$
was trivial. For the remaining $40\%$ of the cases, we computed
the ratio of symbols produced by \textsc{Advanced}
to symbols produced by \textsc{Simple+Reduce}. 
This information is summarized in Table~\ref{tab:first}.
We observe that in $75\%$ of the times
exploiting $\sqsubseteq$ by means of \textsc{Advanced} was either noticeably
beneficial or had almost no effects. However,
in about $10\%$ of the cases the impact was clearly 
negative. 
\begin{table}

\begin{center} 
	\begin{tabular}{|| c || c | c | c | c | c | c | c| c||} 
		\hline
		\multicolumn{9}{||c||}{\textsc{Advanced} / \textsc{Simple}
		+ \textit{Reduce}} \\ [0.5ex] 
		\hline				
		Extra States & 
		$1\%$  & $10\%$ & $25\%$ & $50\%$ & $75\%$ & $90\%$ & $99\%$ & $100\%$\\
		\hline\hline
		$k=|S_M|$ & ~0.047~ & ~0.377~ &~ 0.761~ & ~1.0~~~
		 & ~1.045~ & ~1.301~ & ~1.971~ & ~2.973~ \\ 
		\hline
		$k=|S_M|+1$ & 
		0.006  & 0.249 & 0.515 & 1.0 & 1.009 & 1.360 & 1.926 & $\infty$\\
		
%		0.047 & 0.377 & 0.761 & 1.0 & 1.045 & 1.301 & 1.971 & 2.973 \\ 
		\hline
		
		\hline
	\end{tabular}
\end{center}
\caption{\label{tab:first} Symbols produced by \textsc{Advanced}
over symbols produced by \textsc{Simple+Reduce}}
\end{table}

% \section{Conclusions}
	
%	\section{Non-introduced notation}
%	
%	\begin{itemize}
%		\item $R\preceq \alpha$.
%	\end{itemize}
%	

\newpage
\bibliographystyle{splncs04}
\bibliography{main}

\begin{thebibliography}{10}
\providecommand{\url}[1]{\texttt{#1}}
\providecommand{\urlprefix}{URL }
\providecommand{\doi}[1]{https://doi.org/#1}

\bibitem{angluinLearningRegularSets1987}
Angluin, D.: Learning regular sets from queries and counterexamples.
  Information and Computation  \textbf{75}(2),  87--106 (Nov 1987)

\bibitem{bergCorrespondenceConformanceTesting2005}
Berg, T., Grinchtein, O., Jonsson, B., Leucker, M., Raffelt, H., Steffen, B.:
  On the {{Correspondence Between Conformance Testing}} and {{Regular
  Inference}}. In: Cerioli, M. (ed.) Fundamental {{Approaches}} to {{Software
  Engineering}}. pp. 175--189. Lecture {{Notes}} in {{Computer Science}},
  {Springer}, {Berlin, Heidelberg} (2005)

\bibitem{broyModelbasedTestingReactive2005}
Broy, M. (ed.): Model-Based Testing of Reactive Systems: Advanced Lectures.
  No.~3472 in Lecture Notes in Computer Science, {Springer}, {Berlin ; New
  York} (2005)

\bibitem{clementeEfficientReductionNondeterministic2019a}
Clemente, L., Mayr, R.: Efficient reduction of nondeterministic automata with
  application to language inclusion testing. Logical Methods in Computer
  Science ; Volume 15 p. Issue 1 ; 18605974 (2019)

\bibitem{delahigueraGrammaticalInferenceLearning2010}
{De la Higuera}, C.: Grammatical Inference: Learning Automata and Grammars.
  {Cambridge University Press}, {Cambridge} (2010)

\bibitem{dorofeevaFSMbasedConformanceTesting2010a}
Dorofeeva, R., {El-Fakih}, K., Maag, S., Cavalli, A.R., Yevtushenko, N.:
  {{FSM-based}} conformance testing methods: {{A}} survey annotated with
  experimental evaluation. Information and Software Technology
  \textbf{52}(12),  1286--1297 (Dec 2010)

\bibitem{dorofeevaImprovedConformanceTesting2005}
Dorofeeva, R., {El-Fakih}, K., Yevtushenko, N.: An {{Improved Conformance
  Testing Method}}. In: Wang, F. (ed.) Formal {{Techniques}} for {{Networked}}
  and {{Distributed Systems}} - {{FORTE}} 2005. pp. 204--218. Lecture {{Notes}}
  in {{Computer Science}}, {Springer}, {Berlin, Heidelberg} (2005)

\bibitem{fujiwaraTestSelectionBased1991}
Fujiwara, S., {von Bochmann}, G., Khendek, F., Amalou, M., Ghedamsi, A.: Test
  {{Selection Based}} on {{Finite State Models}}. IEEE Transactions on Software
  Engineering  \textbf{17}(6),  591--603 (Jun 1991)

\bibitem{harrisSynthesisFiniteState1998}
Harris, M.: Synthesis of finite state machines: {{Functional}} optimization.
  Microelectronics Journal  \textbf{29}(6),  364--365 (Jun 1998)

\bibitem{learnlib}
Isberner, M., Howar, F., Steffen, B.: The open-source {{LearnLib}}. In:
  Kroening, D., P{\u a}s{\u a}reanu, C.S. (eds.) Computer Aided Verification.
  pp. 487--495. {Springer International Publishing}, {Cham} (2015)

\bibitem{jiangMinimalNFAProblems1993}
Jiang, T., Ravikumar, B.: Minimal {{NFA Problems}} are {{Hard}}. SIAM Journal
  on Computing  \textbf{22}(6),  1117--1141 (Dec 1993)

\bibitem{joonkikimSimplificationSequentialMachines1972}
{Joonki Kim}, Newborn, M.: The {{Simplification}} of {{Sequential Machines}}
  with {{Input Restrictions}}. IEEE Transactions on Computers
  \textbf{C-21}(12),  1440--1443 (Dec 1972)

\bibitem{kupfermanVerificationFairTransition1996}
Kupferman, O., Vardi, M.Y.: Verification of fair transition systems. In: Alur,
  R., Henzinger, T.A. (eds.) Computer {{Aided Verification}}. pp. 372--382.
  Lecture {{Notes}} in {{Computer Science}}, {Springer}, {Berlin, Heidelberg}
  (1996)

\bibitem{larrauriMinimizationSynthesisTail2021}
Larrauri, A., Bloem, R.: Minimization and {{Synthesis}} of the {{Tail}} in
  {{Sequential Compositions}} of {{Mealy Machines}}. arXiv:2105.10292 [cs]
  (Oct 2021)

\bibitem{mooreGedankenExperimentsSequentialMachines1956}
Moore, E.F.: Gedanken-{{Experiments}} on {{Sequential Machines}}. In: Shannon,
  C.E., McCarthy, J. (eds.) Automata {{Studies}}. ({{AM-34}}), pp. 129--154.
  {Princeton University Press} (Dec 1956)

\bibitem{peledBlackBoxChecking1999}
Peled, D., Vardi, M.Y., Yannakakis, M.: Black {{Box Checking}}. In: Wu, J.,
  Chanson, S.T., Gao, Q. (eds.) Formal {{Methods}} for {{Protocol Engineering}}
  and {{Distributed Systems}}, vol.~28, pp. 225--240. {Springer US}, {Boston,
  MA} (1999)

\bibitem{petrenkoTestingPartialDeterministic2005}
Petrenko, A., Yevtushenko, N.: Testing from partial deterministic {{FSM}}
  specifications. IEEE Transactions on Computers  \textbf{54}(9),  1154--1165
  (Sep 2005)

\bibitem{petrenkoTestingStrategiesCommunicating1995}
Petrenko, A., Yevtushenko, N., Dssouli, R.: Testing {{Strategies}} for
  {{Communicating FSMs}}. In: Mizuno, T., Higashino, T., Shiratori, N. (eds.)
  Protocol {{Test Systems}}: 7th Workshop 7th {{IFIP WG}} 6.1 International
  Workshop on Protocol Text Systems, pp. 193--208. {{IFIP}} \textemdash{} {{The
  International Federation}} for {{Information Processing}}, {Springer US},
  {Boston, MA} (1995)

\bibitem{petrenkoLearningCommunicatingState2019a}
Petrenko, A., Avellaneda, F.: Learning {{Communicating State Machines}}. In:
  Beyer, D., Keller, C. (eds.) Tests and {{Proofs}}. pp. 112--128. Lecture
  {{Notes}} in {{Computer Science}}, {Springer International Publishing},
  {Cham} (2019)

\bibitem{simaoReducingTestLength2012a}
Sim{\~a}o, A., Petrenko, A., Yevtushenko, N.: On reducing test length for
  {{FSMs}} with extra states. Software Testing, Verification and Reliability
  \textbf{22}(6),  435--454 (2012)

\bibitem{souchaFSMLib}
Soucha, M.: {{FSMLib}}. https://github.com/Soucha/FSMlib

\bibitem{souchaSPYHMethodImprovementTesting2018}
Soucha, M., Bogdanov, K.: {{SPYH-Method}}: {{An Improvement}} in {{Testing}} of
  {{Finite-State Machines}}. In: 2018 {{IEEE International Conference}} on
  {{Software Testing}}, {{Verification}} and {{Validation Workshops}}
  ({{ICSTW}}). pp. 194--203 (Apr 2018)

\bibitem{souchaObservationTreeApproach2020}
Soucha, M., Bogdanov, K.: Observation {{Tree Approach}}: {{Active Learning
  Relying}} on {{Testing}}. The Computer Journal  \textbf{63}(9),  1298--1310
  (Aug 2020)

\bibitem{souzaHSwitchCoverNew2017}
de~Souza, {\'E}.F., de~Santiago~J{\'u}nior, V.A., Vijaykumar, N.L.: H-{{Switch
  Cover}}: A new test criterion to generate test case from finite state
  machines. Software Quality Journal  \textbf{25}(2),  373--405 (Jun 2017)

\bibitem{t.s.chowTestingSoftwareDesign1978}
{T.S. Chow}: Testing {{Software Design Modeled}} by {{Finite-State Machines}}.
  IEEE Transactions on Software Engineering  \textbf{SE-4}(3),  178--187 (May
  1978)

\bibitem{vaandragerNewApproachActive2021a}
Vaandrager, F., Garhewal, B., Rot, J., Wi{\ss}mann, T.: A {{New Approach}} for
  {{Active Automata Learning Based}} on {{Apartness}}. arXiv:2107.05419 [cs]
  (Oct 2021)

\bibitem{vasilevskiiFailureDiagnosisAutomata1975}
Vasilevskii, M.P.: Failure diagnosis of automata. Cybernetics  \textbf{9}(4),
  653--665 (1975)

\bibitem{wangInputDonCare1993}
Wang, H., Brayton, R.K.: Input don't care sequences in {{FSM}} networks. In:
  Proceedings of 1993 {{International Conference}} on {{Computer Aided Design}}
  ({{ICCAD}}). pp. 321--328 (Nov 1993)

\end{thebibliography}
\newpage

\appendix
	
\section{Proof of Theorem~\ref{thm:seq}}
\label{ap:seq}
        It is clearly enough to show the result for $a=b$,
        $\alpha\models (s,a)\nsim (t,b)$ is equivalent to
        $\alpha\models (s,b)\nsim (t,b)$ whenever $\LA(a)\supseteq \LA(b)$. \par
		For any $s,t\in S_M$ and $a\in S_A$, we write $s\nsim_a t$ 
		as a shorthand for $(s,a)\nsim (t,a)$. Additionally,
		we say that $s\nsim_a^j t$ when
		$\alpha\models s\nsim_a t$ for some $\alpha$
		with $|\alpha|\leq j$.
		We adopt the convention that $s\sim_a^0 t$ for all 
		$s,t,a$. Let $m=|S_M||S_A|$.
		We show that $\sim_{a}^m$ is the same relation
		as $\sim_{a}$ for all $a\in S_A$. 
		Note that this proves our statement. 
		We proceed by showing various claims. The first ones are straight-forward.
		\par
		Claim 1. \textit{ The relation $\sim_a^{j+1}$
			refines $\sim_a^j$, written $\sim_a^j\supseteq \sim_a^{j+1}$. This means that $s \sim_a^{j+1} t$
			implies $s \sim_a^{j+1} t$ for all $s,t,a,j$. 
		}\par
		Claim 2. \textit{ The relations $\sim_a^j$, $\sim_a$ 
			are equivalence relations, and
			$\sim_a=\cap_{j=1}^\infty \sim_a^j$. }\par
		Claim 3. \textit{ Suppose that for some $j\in \N$
			it holds that $\sim_a^j=\sim_a^{j+1}$ for all $a\in S_A$. 
			Then $\sim_a^j=\sim_a^k$ for all $k\geq j$ and all $a\in S_A$.}
		To show this claim, suppose that $s\sim_a^j t$ but
		$s\nsim_a^{j+1} t$, for some $j>0$. Let $i_1i_2\dots i_{j+1} \in \Lcal(a)$ 
		be a sequence distinguishing $s$ and $t$. 
		Let $s^\prime\coloneqq\delta_M(s,i_1)$, $t^\prime
		\coloneqq \delta_N(t,i_1)$, and let $b\in \Delta_A(a,i_1)$ be such that 
		$i_2\dots i_{j+1}$ belongs to $\Lcal(b)$. 
		Then $s^\prime \nsim^{j}_b t^\prime$. Furthermore,  
		it cannot be that $s^\prime \nsim^{j-1}_b t^\prime$ 
		as well. Otherwise $s\nsim_a^j t$ would follow, 
		contradicting our initial assumption. Hence we have shown 
		that if for some $a\in S_A$, $j>0$ it holds
		$\sim_a^j\neq \sim_a^{j+1}$ then $\sim_b^{j-1} \neq 
		\sim_b^j$ for some $b\in S_B$. This is equivalent to the claim. \par
		
		Now we can complete the proof of our theorem. For each
		$j\in \N$ consider the set of equivalence relations $\{ \sim_a^j \}_{a\in S_A}$. Because of Claim 3, we know that at each successive step $j=1,2,\dots$ at least one
		relation is refined $\sim_a^j\supsetneq \sim_a^{j+1}$,
		until $\sim_a^j=\sim_a$ for all $a\in S_A$. If $\sim_a^j\supsetneq \sim_a^{j+1}$, then 
		$\sim_a^{j+1}$ yields strictly more 
		equivalence classes than $\sim_a^{j}$. For each $a$, 
		the relation $\sim_a^j$ can have at most $|S_M|$
		equivalence classes. Thus, the relations
		$\{ \sim_a^j \}_{a\in S_A}$ can be refined at most $m=|S_M||S_A|$ times in total. This implies $\sim_a^{m}=\sim_a$ for all $a$, 
		as we wanted to show. \qed

\section{Proof of
Theorem~\ref{thm:main_w_cover}}
\label{ap:thm_main}
We proceed by contradiction as in
Theorem~\ref{thm:main_no_cover}.
We take $N\in \Im_k$
satisfying $M\sim_E N$ and $M\nsim_\LA N$, and 
$\alpha\in \LA \setminus E$ a sequence
% $\alpha\models M\nsim N$
distinguishing $M$ and $N$ which minimizes $|\alpha|_\SubV$. 
This time we show that another sequence $\alpha^\prime$ separating $M$ and $N$ as well, with $|\alpha^\prime|_\SubV < |\alpha|_\SubV$ can be found. 
Let $\beta\in E$ be a $(k+1)$-redundant prefix of
$\alpha$, with $\beta\leq_\SubV \alpha$, 
and let $\gamma$ be the prefix for which
$\alpha=\beta \gamma$. Let $b\in \Delta_A(\beta)$
be a state satisfying $\gamma \in \LA(b)$. The node $\sfrac{b}{\beta}$ is $(k+1)$-redundant, so there
is a redundancy certificate $(R,B)$ witnessing this property. 
Let $R= (\sfrac{d_j}{\zeta_j})_{j=1}^\ell$.
As $|R\cup B|=k+1$, by the pigeonhole principle there are $\sfrac{c}{\varphi_1}
, \sfrac{\varphi_2}{c_2}\in C\cup B$ satisfying 
$\delta_N(\varphi_1)= \delta_N(\varphi_2)$.
Without loss of generality we can assume that $|\varphi_1|_\SubV \leq |\varphi_2|_\SubV$. 
As before, we proceed by giving various claims.

\begin{claim}[I] $(\delta_M(\varphi_1),c_1)\sim
	(\delta_M(\varphi_2),c_2)$.
\end{claim}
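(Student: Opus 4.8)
The plan is to establish Claim~I by contradiction, in close analogy with Claim~IV of Theorem~\ref{thm:main_no_cover}. So suppose, towards a contradiction, that the two locations are incompatible, $(\delta_M(\varphi_1),c_1)\nsim(\delta_M(\varphi_2),c_2)$. First I would record that both nodes $\sfrac{c_1}{\varphi_1}$ and $\sfrac{c_2}{\varphi_2}$ belong to $R\cup B$, and that $\varphi_1,\varphi_2\in E$: indeed $R\subseteq\Gamma(E\setminus V)$ and $B\subseteq\Gamma(V)$ with $V\subseteq E$. Hence $\varphi_1\divr\varphi_2$ is a meaningful statement and Lemma~\ref{lem:aux1} is applicable to the pair $\varphi_1,\varphi_2$.

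Next I would invoke the defining property of the redundancy certificate $(R,B)$ witnessing that $\sfrac{b}{\beta}$ is $(k+1)$-redundant: by the definition of $k$-redundancy, $E$ is incompatibility-preserving with respect to $R\cup B$. Applying this to the pair $\sfrac{c_1}{\varphi_1},\sfrac{c_2}{\varphi_2}\in R\cup B$, whose associated locations we are assuming incompatible, yields $\varphi_1\divr\varphi_2$, i.e. a common suffix $\eta$ with $\varphi_1\eta,\varphi_2\eta\in E$ and $\lambda_M(\delta_M(\varphi_1),\eta)\neq\lambda_M(\delta_M(\varphi_2),\eta)$. Then Lemma~\ref{lem:aux1}, together with $M\sim_E N$, gives $\delta_N(\varphi_1)\neq\delta_N(\varphi_2)$, contradicting the choice of $\varphi_1,\varphi_2$ (produced by the pigeonhole principle) as a pair with $\delta_N(\varphi_1)=\delta_N(\varphi_2)$. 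Therefore the two locations are compatible, which is Claim~I. (The same argument covers the degenerate sub-case where both chosen nodes lie in $B$: basis property~(1) makes their locations incompatible, so the contradiction is reached immediately.)

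I do not expect any genuine obstacle in Claim~I itself; it is the easy counterpart of Claim~IV and uses nothing beyond incompatibility-preservation and Lemma~\ref{lem:aux1}. The real work lies in the companion claims that still remain to be established in this appendix proof: the monotonicity relation $c_1\sqsupseteq c_2$ between the two context states (for which one will need basis property~(2) in the mixed case where $\sfrac{c_1}{\varphi_1}\in B$ and $\sfrac{c_2}{\varphi_2}\in R$, after first using $|\varphi_1|_\SubV\leq|\varphi_2|_\SubV$ and $|\varphi|_\SubV=0\iff\varphi\in V$ to rule out the opposite mixed configuration, and the monotonicity of $R$ otherwise), the construction of the shorter witness $\alpha^\prime$ and the verification that $\alpha^\prime\in\LA$ (transferring availability of the relevant suffix from $c_2$ to $c_1$ via $c_1\sqsupseteq c_2$ and the fact that $(R,B)$ is a certificate for $\sfrac{b}{\beta}$ with $\gamma\in\LA(b)$), and finally the strict decrease $|\alpha^\prime|_\SubV<|\alpha|_\SubV$ of the measure, which is where the order $\leq_\SubV$ and the inclusion $R\subseteq\Gamma(E\setminus V)$ are essential.
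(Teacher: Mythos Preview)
Your argument is correct and matches the paper's proof exactly: assume incompatibility, use that $E$ is incompatibility-preserving w.r.t.\ $R\cup B$ to obtain $\varphi_1\divr\varphi_2$, and then invoke Lemma~\ref{lem:aux1} together with $M\sim_E N$ to contradict $\delta_N(\varphi_1)=\delta_N(\varphi_2)$. Your additional remarks (that $\varphi_1,\varphi_2\in E$, and the parenthetical about both nodes lying in $B$) are sound elaborations but not needed for Claim~I itself; the paper defers the ``both in $B$'' case to Claim~II.
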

Otherwise we would have $\varphi_1 \divr \varphi_2$,
as $E$ is incompatibility preserving w.r.t. $R\cup B$. But this contradicts $M\sim_E N$, proving the claim.
\begin{claim}[II]
	Either one of the following holds. \textbf{Case 1:} $\sfrac{c_1}{\varphi_1}=\sfrac{d_x}{\zeta_x}$,
	$\sfrac{c_2}{\varphi_2}=\sfrac{d_y}{\zeta_y}$, for some indices
	$x<y$. ~ \textbf{Case 2:}
	$(\varphi_1, c_1)\in B$, $(\varphi_2, c_2)\in C$
\end{claim}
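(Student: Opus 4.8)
The plan is to extract Claim~II from Claim~I together with the combinatorial structure of the certificate $(R,B)$ and the order $\leq_\SubV$. Write $\sfrac{c_1}{\varphi_1},\sfrac{c_2}{\varphi_2}$ for the two nodes of $R\cup B$ delivered by the pigeonhole argument, indexed so that $|\varphi_1|_\SubV\leq|\varphi_2|_\SubV$; they are distinct nodes since $|R\cup B|=k+1>k\geq|S_N|$. By Claim~I their locations are compatible. The first step is to rule out that both nodes lie in $B$: property~(1) of a basis says that distinct nodes of $B$ carry incompatible locations, so compatibility would force $\sfrac{c_1}{\varphi_1}=\sfrac{c_2}{\varphi_2}$, a contradiction. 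Hence at least one of the two nodes belongs to $R$.

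The second step records how $\leq_\SubV$ interacts with the two halves of the certificate. Since $R\subseteq\Gamma(E\setminus V)$ and $B\subseteq\Gamma(V)$, every word of a $B$-node lies in $V$ and so has $|\cdot|_\SubV=0$, while every word of an $R$-node lies in $E\setminus V$ and so has $|\cdot|_\SubV\geq1$; in particular a $B$-node and an $R$-node never share a word. Next I would establish the monotonicity fact: if $\varphi<\varphi'$ are words of two $R$-nodes then $|\varphi|_\SubV<|\varphi'|_\SubV$. Let $\eta$ be the longest prefix of $\varphi'$ lying in $V$, so $|\varphi'|_\SubV=|\varphi'|-|\eta|$ and $\eta<\varphi'$ (as $\varphi'\notin V$). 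The words $\eta$ and $\varphi$ are both prefixes of $\varphi'$, hence $\leq$-comparable; if $\varphi<\eta$ then $\varphi<\eta<\varphi'\leq\beta$ with $\eta\in V$, contradicting $\varphi\leq_\SubV\beta$ (which holds because $(R,B)$ is a redundancy certificate for $\sfrac{b}{\beta}$, so $R\preceq_\SubV\sfrac{b}{\beta}$). So $\eta\leq\varphi$, and $|\varphi|_\SubV\leq|\varphi|-|\eta|<|\varphi'|-|\eta|=|\varphi'|_\SubV$. Applied along $R=(\sfrac{d_j}{\zeta_j})_{j=1}^{\ell}$, whose words satisfy $\zeta_1<\dots<\zeta_\ell$, this yields $i<j\Rightarrow|\zeta_i|_\SubV<|\zeta_j|_\SubV$.

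The case split is now immediate. If both pigeonhole nodes lie in $R$, write $\sfrac{c_1}{\varphi_1}=\sfrac{d_x}{\zeta_x}$ and $\sfrac{c_2}{\varphi_2}=\sfrac{d_y}{\zeta_y}$; distinctness gives $x\neq y$, and $x>y$ would force $|\varphi_1|_\SubV>|\varphi_2|_\SubV$, against our indexing, so $x<y$ and we are in Case~1. Otherwise exactly one node lies in $B$; by the first observation above that node has $|\cdot|_\SubV=0$ whereas the $R$-node has $|\cdot|_\SubV\geq1$, so the $B$-node is the first one, giving $(\varphi_1,c_1)\in B$ and $(\varphi_2,c_2)\in R$, which is Case~2.

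The step I expect to be the real work is the monotonicity fact: it is the only place where $R\preceq_\SubV\sfrac{b}{\beta}$ enters essentially, and it hinges on the precise reading of $\leq_\SubV$ — that no element of $V$ lies strictly between a node's word and $\beta$ — to force the shortest-suffix witness of the longer word down to a prefix of the shorter one. The remaining points (distinctness of the pigeonhole pair, disjointness of $R$ and $B$, and compatibility of the $<$-order of a ranking with its $\leq_\SubV$-measure) are routine bookkeeping.
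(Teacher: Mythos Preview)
Your proposal is correct and follows essentially the same approach as the paper: rule out both nodes lying in $B$ via Claim~I and property~(1) of a basis, observe that $B$-words have $|\cdot|_\SubV=0$ while $R$-words have $|\cdot|_\SubV>0$, and use $R\preceq_\SubV\sfrac{b}{\beta}$ to deduce $|\zeta_1|_\SubV<\dots<|\zeta_\ell|_\SubV$, after which the case split is forced by the indexing convention. Your derivation of the monotonicity fact via the longest $V$-prefix $\eta$ of $\varphi'$ is a slightly more explicit version of the paper's one-line remark that $\zeta_1\leq_\SubV\beta$ forbids any $\xi\in V$ strictly between $\zeta_1$ and $\beta$, hence $\zeta_1<_\SubV\dots<_\SubV\zeta_\ell$; the content is the same.
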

Last claim shows that both $\sfrac{c_1}{\varphi_1},
\sfrac{c_2}{\varphi_2}$ cannot belong to $B$ at the same time, as it would yield a conflict with the definition of basis. Thus there are two possible scenarios: either (i) both nodes belong to $R$, or (ii) exactly one of them lies in $B$. We show that these correspond to \textbf{Case 1} and \textbf{Case 2} in the statement, respectively. We begin by assuming (i).
In this situation, we know that  $\sfrac{c_1}{\varphi_1}=\sfrac{d_x}{\zeta_x}$,
$\sfrac{c_2}{\varphi_2}=\sfrac{d_y}{\zeta_y}$,
for some $x, y$, and we have to prove
$x<y$.
As $R\preceq_\SubV \sfrac{b}{\beta}$, 
it holds that
$\zeta_1\leq_\SubV \beta$, so there is
no $\xi\in V$ with $\zeta_1 < \xi < \beta$.
This implies $\zeta_1 <_\SubV \dots 
<_\SubV \zeta_\ell$, and as a consequence  $|\zeta_1|_\SubV < \dots < |\zeta_\ell|_\SubV$. By assumption $|\varphi_1|_\SubV \leq |\varphi_2|_\SubV$, 
so $x<y$ follows. Hence, \textbf{Case 1} holds. Now we assume (ii) instead. Note that for all $(\zeta, d)\in B$ it holds $\zeta\in V$, so $|\zeta|_\SubV=0$. Conversely, for all $(\zeta, d)\in C$, $\zeta\notin V$, and
$|\zeta|_\SubV>0$. Again, by assumption
$|\varphi_1|_\SubV \leq |\varphi_2|_\SubV$, implying $\sfrac{c_1}{\varphi_1}\in B$
and $\sfrac{c_2}{\varphi_2}\in C$, as in \textbf{Case 2}. \\~\\
For the remainder of the proof we will refer to the cases
 \textbf{Case 1} and \textbf{Case 2} in last 
 claim. 

\begin{claim}[III]
	The inequality $|\varphi_1|_\SubV \leq |\varphi_2|_\SubV$ is strict.
\end{claim}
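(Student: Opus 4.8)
The plan is to simply branch on the two alternatives furnished by Claim~II and verify that in neither of them can $|\varphi_1|_\SubV \leq |\varphi_2|_\SubV$ degenerate into an equality. So the first step is to recall that, under the standing assumption $|\varphi_1|_\SubV \leq |\varphi_2|_\SubV$, Claim~II tells us we are in exactly one of \textbf{Case~1} (both nodes lie in $R$, say $\sfrac{c_1}{\varphi_1} = \sfrac{d_x}{\zeta_x}$ and $\sfrac{c_2}{\varphi_2} = \sfrac{d_y}{\zeta_y}$ with $x < y$) or \textbf{Case~2} ($(\varphi_1,c_1) \in B$ and $(\varphi_2,c_2) \in R$).

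In \textbf{Case~1} I would invoke the chain already proved inside Claim~II, namely $|\zeta_1|_\SubV < \dots < |\zeta_\ell|_\SubV$, which rests on $\zeta_1 <_\SubV \dots <_\SubV \zeta_\ell$ together with the fact that none of the $\zeta_j$ belongs to $V$ (since $R \subseteq \Gamma(E\setminus V)$ by the definition of a redundancy certificate). Because $x < y$, this gives at once $|\varphi_1|_\SubV = |\zeta_x|_\SubV < |\zeta_y|_\SubV = |\varphi_2|_\SubV$. In \textbf{Case~2} I would argue directly from membership in $V$: since $B \subseteq \Gamma(V)$ we have $\varphi_1 \in V$, hence $|\varphi_1|_\SubV = 0$; on the other hand $\sfrac{c_2}{\varphi_2} \in R \subseteq \Gamma(E\setminus V)$, so $\varphi_2 \notin V$, and as $V$ is well-founded ($\emptyword \in V$) this forces $|\varphi_2|_\SubV \geq 1$. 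Either way the inequality is strict.

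I do not expect a genuine obstacle here: the only point that needs a modicum of care is the strictness of the monotonicity of $|\cdot|_\SubV$ along a $<_\SubV$-chain of nodes lying outside $V$, and that fact was already extracted while proving Claim~II, so the present proof is essentially two one-line verifications. As a harmless by-product the argument also yields $\varphi_1 \neq \varphi_2$, which is worth noting since the pigeonhole step in Claim~II only produced two distinct \emph{nodes}, not a priori two distinct words.
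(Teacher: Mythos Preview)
Your proposal is correct and follows essentially the same approach as the paper: both proofs branch on the two cases of Claim~II, invoke the strict chain $|\zeta_1|_\SubV < \dots < |\zeta_\ell|_\SubV$ in Case~1, and use $\varphi_1\in V$ versus $\varphi_2\notin V$ to get $0 = |\varphi_1|_\SubV < |\varphi_2|_\SubV$ in Case~2. Your closing remark that this also yields $\varphi_1\neq\varphi_2$ is a small bonus not made explicit in the paper.
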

\textbf{Case 1:} 
In the proof of Claim (II) we showed $|\zeta_1|_\SubV < \dots < |\zeta_\ell|_\SubV$. Hence the statement follows. In this situation
$\varphi_1, \varphi_2$ lie among the
$\zeta_j$'s, so the ranking of inequalities implies our claim. ~ 
\textbf{Case 2:} Note that for all
$(\zeta, d)\in B$ it holds $\zeta\in V$, so
$|\zeta|_\SubV=0$. Conversely, for all
$(\zeta, d)\in C$, $\zeta\notin V$, and
$|\zeta|_\SubV>0$. This shows the claim. 
\begin{claim}[IV] $c_1\sqsupseteq c_2$.
\end{claim}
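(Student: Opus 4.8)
The plan is to obtain Claim~(IV) as an immediate consequence of Claims~(I) and~(II): I would split along the two scenarios isolated in Claim~(II), and in each invoke a different structural ingredient of the certificate $(R,B)$.

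\textbf{Case 1.} Here $\sfrac{c_1}{\varphi_1}=\sfrac{d_x}{\zeta_x}$ and $\sfrac{c_2}{\varphi_2}=\sfrac{d_y}{\zeta_y}$ for indices $x<y$, so both nodes lie on the ranking $R=(\sfrac{d_j}{\zeta_j})_{j=1}^\ell$. Since $R$ is monotonous, $d_1\sqsupseteq\dots\sqsupseteq d_\ell$, and in particular $c_1=d_x\sqsupseteq d_y=c_2$. \textbf{Case 2.} Here $\sfrac{c_1}{\varphi_1}\in B$ and $\sfrac{c_2}{\varphi_2}\in R$. In this case I would apply property~(2) in the definition of a basis: $B$ is a basis for $R$, and Claim~(I) supplies $(\delta_M(\varphi_1),c_1)\sim(\delta_M(\varphi_2),c_2)$ with the first node in $B$ and the second in $R$; the defining property then yields exactly $c_1\sqsupseteq c_2$. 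Together the two cases give $c_1\sqsupseteq c_2$, which is Claim~(IV).

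The only point that needs care is the bookkeeping in Case~2: checking that Claim~(I) is precisely the compatibility hypothesis needed to trigger property~(2) of a basis, and that the roles are matched correctly — the basis node contributes the context state on the larger side of $\sqsubseteq$ and the ranking node the one on the smaller side, consistent with the intuition that a basis represents the more ``testable'' locations. Beyond this there is no real obstacle; Claim~(IV) is a short corollary of the two preceding claims together with the monotonicity of $R$ and the defining properties of a basis.
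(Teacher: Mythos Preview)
Your proposal is correct and matches the paper's own argument essentially line for line: in Case~1 you invoke the monotonicity of the ranking $R$, and in Case~2 you combine Claim~(I) with property~(2) of the definition of a basis, exactly as the paper does. The only cosmetic difference is that the paper cites Claim~(III) (really to recall which node lies in $B$ and which in $R$) where you cite Claim~(II) directly; the content is the same.
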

\textbf{Case 1:} The statement follows
from the definition of monotonous ranking. ~ 
\textbf{Case 2:} By \textit{Claim (III)}, it holds
$(\varphi_1, c_1)\in B$,
$(\varphi_2, c_2)\in C$. Using the definition of basis and \textit{Claim (I)},
we obtain $c_1\sqsupseteq c_2$ in this case as well. \\~\\
Now we are in conditions to build the
second distinguishing sequence $\alpha^\prime$. By \textit{Claim (II)} 
$\sfrac{c_2}{\varphi_2}\in C$, so $\varphi_2\leq \beta$. Let $\omega$ be the suffix satisfying $\beta= \varphi_2\omega$. Then
$\alpha= \varphi_2\omega\gamma$. We
define $\alpha^\prime$ as the word
$\varphi_1\omega\gamma$. 
They following claims can all be shown exactly as in Theorem~\ref{thm:main_no_cover}'s proof:
\textit{Claim (V).} $\alpha^\prime \in \LA$. ~ \textit{Claim (VI).}
$\lambda_M(\delta_M(\varphi_{2}),\omega\gamma)\neq \lambda_N(\delta_N(\varphi_{2}),\omega\gamma)$. ~
\textit{Claim (VII).} $\lambda_N(\delta_N(\varphi_{1}),\omega\gamma)= \lambda_N(\delta_N(\varphi_{1}),\omega\gamma)$. ~ \textit{Claim (VIII).}
$\lambda_M(\delta_M(\varphi_{1}),\omega\gamma)= \lambda_M(\delta_M(\varphi_{2}),\omega\gamma)$.\\~\\
Claims \textit{(V)-(VII)} show that $\alpha^\prime$ belongs to $\LA$ and distinguishes $M$ from $N$. All that is 
left is to prove $|\alpha^\prime|_\SubV < |\alpha|_\SubV$. Using 
$|\alpha|_\SubV = \varphi_1$
and $\alpha^\prime= \varphi_2 \omega \gamma$
we obtain
(1) $|\alpha^\prime|_\SubV \leq |\varphi_2|_\SubV
+ |\omega\gamma|$.
Now we show a similar expression for 
$|\alpha|_\SubV$.
As $(\varphi_2, c_2)\preceq_\SubV \sfrac{b}{\beta}$ it holds
$\varphi_2 \leq_\SubV \beta$. Also, by hypothesis, $\beta \leq_\SubV \alpha$. Putting the inequalities together we get
$\varphi_2\leq_\SubV \alpha$. This
yields $|\alpha|_\SubV=|\varphi_2|_\SubV +  |\omega \gamma|$. Additionally
$|\alpha^\prime|_\SubV \leq |\varphi_1|_\SubV
+ |\omega\gamma|$. Comparing the expression for $\alpha$ and $\alpha^\prime$ and utilizing \textit{Claim (III)} gets us $|\alpha^\prime|_\SubV
< |\alpha|_\SubV$. This contradicts our initial choice of $\alpha$ and completes the proof of the theorem. \qed
\par

\section{Detailed Description of Complex} \label{ap:complex}
	Algorithm~\ref{alg:second}
	shows the main structure of 
	\textsc{Complex}.
	The routine $\textsc{Core}()$ first obtains a core
	$Q$ from $\textsc{WeakCore}$ and afterwards removes each 
	location $(s,a)\in Q$ if there is another one $(t,b)\in Q$ where $(s,a)\sim (t,b)$
	and $b\sqsupseteq a$. The second difference
	is that \textsc{Complex} does not make use of harmonized quasi-identifiers, unlike \textsc{Simple}, but instead relies on a map of
	distinguishing sequences $DistSeqs$. This map stores a shortest 
	separating sequence $\alpha\models s\nsim_a t$ for each triple $s,t\in S_M$, $a\in S_A$, if it exists, or the empty sequence otherwise. Finally, the last difference is that in \textsc{Complex} no distinguishing sequences are added
	to the cover $V$ initially. Instead we add these sequences dynamically 
	during the exploration process. \par
	
		\begin{algorithm}
		\caption{\textsc{Complex}($M,A,\sqsubseteq, k$)}\label{alg:second}
		\hspace*{\algorithmicindent} \textbf{Input}
		A specification machine $M$, context automaton $A$, 
		an under-approximation $\sqsubseteq$ of language containment over $S_A\times S_A$,
		and a bound $k$. \\
		\hspace*{\algorithmicindent} \textbf{Output} A $k$-complete
		suite $E$ for $M$ in the context of $A$.
		\begin{algorithmic}[1]
			\State $Q\gets \textsc{Core}()$
			\State $V, \, toCvr \gets \textsc{Cover}(Q)$
			\Comment{
				\parbox[t]{.4\linewidth}{
					$toCvr$ is a map $Q \rightarrow V$ where $(s,a)\in \Delta_{M\times A}(toCvr(s,a))$
				}
			} 
			\State $E\gets V$
			\State $DistSeqs \gets$ map $S_M\times S_M \times A \rightarrow I_M^*$ assigning a shortest distinguishing sequence $\alpha \models
			s\nsim_a t$ for each $s,t\in S_M$, $a\in S_A$. 
			% \Comment{The suite $E\subseteq \LA$ under construction} 
			% \State $cov\_map\gets \{ \}$ \Comment{A map from locations
				% $q\in Q$ to words $\alpha\in V$}				
			\State \textbf{for all} $\alpha \in V$ \textbf{do}
			$\alpha_\SubV\gets \alpha$, and $\textsc{Explore}(\emptyword)$ 
			\State \Return $E$
		\end{algorithmic}
	\end{algorithm}

	In its final step, \textsc{Complex} performs a depth-first
	search from each word $\alpha_\SubV\in V$, enlarging $E$ along the
	way. For this, the algorithm relies on the same routines
 	\textsc{Explore} and \textsc{SearchCerts} utilized by \textsc{Simple}.
 	We modify, however the functions \textsc{BuildRankings}, \textsc{Basis},
 	and \textsc{ExploitCert}. Now we can use $\sqsupseteq$ to produce 
 	general monotonous rankings, instead of only flat ones as before. This allows \textsc{Complex} to potentially prune the search space earlier,
 	as it can force shorter sequences to become $(k+1)$-redundant.\par
 	Similarly as with \textsc{Simple}, the method \textsc{BuildRankings}$(\beta,a)$ builds a family $Rankings$ of 
 	monotonous rankings $R\preceq_\SubV (\alpha_\SubV \beta,a)$. It does so  	
 	by building for each $b\in S_A$ a maximum-length ranking
 	$(\sfrac{c_j}{\varphi_j})_{j=1}^\ell$ where $c_j=b$. This can be
 	done incrementally by scanning the nodes 
 	$\sfrac{c}{\alpha_\SubV \beta^\prime}\preceq_\SubV
 	\sfrac{a}{\alpha_\SubV \beta}$
 	for each prefix $\beta^\prime \leq \beta$.

	\begin{algorithm}[tbh]
	\caption{\textsc{BuildRankings},
		\textsc{Basis}, \textsc{ExploitCert}
		\hfill (Complex version)} \label{algs:advanced_version}
	\begin{algorithmic}[1]
		\Procedure{BuildRankings}{$\beta,a$}
		\Statex \textbf{Input} a suffix $\beta$ with
		$\alpha_\SubV\beta\in \LA$, and a state
		$a\in \Delta_A(\alpha_\SubV\beta)$ 
		\Statex \textbf{Output} A set $Rankings$ of
		monotonous rankings $R\preceq_\SubV \frac{a}{\alpha_\SubV\beta}$.
		\State $Rankings \gets \{ \}$
		\State initialize empty rankings
		$R^0_{b_1}, R^0_{b_2}, \dots$ for all 
		$b_i\in S_A$.
		\State $\Omega \gets $ set of nodes
		$\sfrac{c}{\varphi} \preceq_\SubV \sfrac{a}{\alpha_\SubV\beta}$.
		\ForAll{ $j = 1,2, \dots, |\beta|$, and \textbf{all} $b\in S_A$}
		\If{$\sfrac{b}{\alpha_\SubV\cdot\beta_{\leq j}}\in \Omega$}
		\State Let $c\in S_A$ be the state $c\sqsupseteq b$ maximizing
		$|R^{j-1}_c|$. 
		\State $R^j_b =  R^{j-1}_c \cup \frac{b}{\alpha_\SubV\cdot\beta_{\leq i}}$
		\Else \, \, $R^j_b = R^{j-1}_b$.
		\EndIf
%		{, and \textbf{all} $(\alpha_\SubV\cdot\beta_{\leq i},b)\in 
%			\Omega$}
		\EndFor
		\State $Rankings\gets \{R^{|\beta|}_{b}\}_{b\in S_A}$
		\EndProcedure
		\Statex
		\Procedure{Basis}{$R$}
		\Statex	\textbf{Input} A monotonous ranking $R=(\sfrac{c_j}{\varphi_j})_{j=1}^\ell \subseteq \Gamma$.
		\Statex \textbf{Output} A basis $B$ for $R$.
		\State $B \gets \{ \}$
		\State $Q^\prime \gets \{ \}$
		\ForAll{ $j= 1,2,\dots, \ell$ and $s\in S_M$}
		\If{ $Q^\prime$ does not contain any $(t,b)$ with $(t,b)\sqsupseteq (s,c_j)$ }
		\State Find $(t,b)\in Q$ with $(t,b)\sqsupseteq (s,c_j)$, and 
		add it to $Q^\prime$.
		\EndIf
		\EndFor
		\State \textbf{for all }$(s,c) \in Q^\prime$, add $(toCvr(s,c),c)$ to $B$
		\State \Return $B$
		\EndProcedure
		\Statex
		\Procedure{ExploitCert}{$R,B$}
		\Statex A redundancy certificate $(R,B)$, where $R=(\sfrac{c_j}{\varphi_j})_{j=1}^\ell$.
		\ForAll{$j=1,\dots,\ell$ and \textbf{all } $(\omega,b)\in B$}
		\State $s\gets \delta_M(\varphi_j)$, $t\gets \delta_M(\omega)$
		\State $x \gets$ maximum index $j\leq x \leq \ell$ satisfying 
		$s \nsim_{c_x} t$.
		\State Add $\varphi_j \gamma$ to $E$, where $\gamma= DistSeqs(s,t,c_x)$  
		\EndFor
		\ForAll{pairs $(\omega_1,b_1),(\omega_2,b_2)\in B$}
		\State $s_1\gets \delta_M(\omega_1)$, $s_2\gets \delta_M(\omega_2)$
		\State  $x \gets$ maximum index $1\leq x \leq \ell$ satisfying 
		$s_1 \nsim_{c_x} s_2$.
		\State Add $\omega_1\gamma, \omega_2\gamma$ to $E$, where
		$\gamma= DistSeqs(s_1,s_2,c_x)$
		\EndFor
		\EndProcedure			
	\end{algorithmic}
	\end{algorithm}

	Finding a greatest basis $B\subseteq \Gamma(V)$ for a 
	monotonous ranking is, in principle, computationally hard, 
	given that this task can be reduced to 
	a maximal independent set problem. However, if we do not aim 
	for a biggest basis, the task can be carried out
	with relative efficiency. We propose a greedy approach
	in \textsc{Basis} for this purpose. \par
	Finally, \textsc{ExploitCert}$(R,B)$ is tasked with making
	$E$ incompatibility preserving w.r.t. the certificate $(R,B)$
	by adding various distinguishing sequences, as before. 
	Following a naive approach involves adding a distinguishing sequence for 
	each pair $(\omega_1, b_1),(\omega_2,b_2)\in C\cup B$ where
	$(\delta_M(\omega_1),b_1)\nsim (\delta_M(\omega_2),b_2)$. This adds up
	to potentially $(|C|+|B|)^2=(k+1)^2$ sequences.
	However, if one chooses the sequences carefully, it is only needed to distinguish
	the nodes in $R$ with those in $B$, and the nodes in $B$ among themselves.
	As $|B|\leq |S_M|$, this brings down the number
	of separating sequences to at most $(k+1)|S_M|$.
	We implement this technique in \textsc{ExploitCert}.

\section{Proof of Theorem~\ref{thm:length}}
\label{ap:length}
	We show that in the body of the main loop \textsc{SearchCerts} (Algorithm~\ref{alg:search_certs})
	is able to find a big enough redundancy certificate for all nodes $\sfrac{a}{\alpha_V\beta}$. 
	For each $a\in S_A$ let $m_a$ be the number of different classes
	in $\sfrac{S_{M\times A}}{\cong}$ corresponding to locations of the form
	$(s,a)$.
	Then $|\beta|=(k|S_A| - \sum_{a\in S_A} m_a)  + 1$. 
	Let $a\in \Delta_A(\alpha_V\beta)$. Then there is at least one
	sequence of nodes $(\alpha_V\beta_{\leq 1}, b_1)\preceq 
	(\alpha_V\beta_{\leq 2}, b_2) \preceq  \dots
	\preceq (\alpha_V\beta, a)$. This sequence
	has length $|B|$, so by the pigeonhole principle
	at one state $b\in S_A$ occurs at least 
	$k - m_b + 1$ times throughout the succession. 
	Thus, this 
	quantity is a lower bound for the size of
	the flat ranking $R_b$, corresponding to $b$, built in the procedure 
	\textsc{BuildRankings}$(\beta, a)$.
	Now, note that \textsc{Basis}$(R_b)$ returns
	a basis of size exactly $n_b$. Hence,
	$|R_b| + |\textsc{Basis}(R_b)|\geq k+1$, 
	and the conditional in Algorithm~\ref{alg:search_certs},
	line~\ref{lin:certs_if} is true. Our initial
	choice of $a\in \Delta_A(\alpha_V\beta)$ was arbitrary,
	so this proves that $\textsc{SearchCerts}(\beta)$ 
	does not return $false$. \qed

	\section{Time-Cost Analysis for
	\textsc{Simple} and \textsc{Complex}}
	\label{ap:cost}
	
	For the complexity analysis of $\textsc{Simple}$, we only need to consider the time spent in the
    routine $\textsc{Explore}$. Potentially, this
    function is called once for each word $\alpha_\SubV \beta$, where
    $\alpha_\SubV\in V$, and $|\beta|\leq |S_A|k - n_{M\times A} + 1$.
    During these calls, $\textsc{Explore}$ invokes the 
	\textsc{SearchCerts} once, on $\beta$, and the method
	\textsc{ExploitCert} at most $|S_A|$ times: one for each
	certificate returned by \textsc{SearchCerts}. We analyze both 
	functions separately. \par
	Inside \textsc{SearchCerts}$(\beta )$ most of the time
	is spent calling \textsc{BuildRankings}$(\beta, a)$. 
	In this second function the bulk of the time is invested in
	building the set $\Omega$ of nodes 
	$\sfrac{c}{\varphi}\preceq_\SubV (\alpha_V\beta, a)$. This can be done
	by back-propagating the node
	$(\alpha_V\beta, a)$ throughout all words $\alpha_V\beta^\prime$
	with $\beta^\prime\leq\beta$. If one stores $A$ reverse transitions,
	this takes at most $O(|S_A|^2|\beta|)$ time. \textsc{BuildRankings} is called
	at most $|S_A|$ times in the a single
	call of \textsc{SearchCerts}. Hence, \textsc{SearchCerts}$(\beta)$
	takes $O(|S_A|^3|\beta|)=
	O(|S_A|^3(e))$ time. The method is called once for
	each word $\alpha_V\beta$, so the total amount of time
	it uses during \textsc{Simple} is 
	$O(|S_A|^4|S_M|(e)|I_M|^{e+1})$ time.
	\par
	The workload inside \textsc{ExploitCert}$(R,B)$ is mainly
	the result of adding distinguishing suffixes. This method adds
	at most $|R||S_M|$ of those to $E$, and each one of these
	sequences has length bounded by $|S_M||S_A|$. If $E$ is stored
	in a tree structure, this can be done in $O(|C||S_A||S_M|^2)=
	O(k|S_A||S_M|^2)$ time. The method is called at most 
	$|S_A|$ times for each word $\alpha_V\beta$. Hence,
	\textsc{Simple} spends at most $O(k|S_A|^3|S_M|^3(e)|I_M|^{e+1})$ time in \textsc{ExploitCert}.
	Putting the bounds for \textsc{SearchCerts} and \textsc{ExploitCert}
	together gives us that the total time cost of \textsc{Simple}$(M,A,k)$
	is $O((k|S_A|^3|S_M|^3 + |S_A|^4|S_M|)e|I_M|^{e+1})$. Analogous arguments can be used
	to obtain the complexity of \textsc{Complex}.
	The only relevant change here is that there is an additional
	inner loop in the routine \textsc{BuildRankings}, increasing its cost
	by a factor of $|S_A|$. This yields a total complexity of 
	$O((k|S_A|^3|S_M|^3 + |S_A|^5|S_M|)e|I_M|^{e+1})$ for 
	\textsc{Complex}$(M,A,\sqsupseteq,k)$.

\end{document}